\newtheorem{theorem}{Theorem}
\newtheorem{lemma}{Lemma}
\newtheorem{proof}{Proof}
\def\BibTeX{{\rm B\kern-.05em{\sc i\kern-.025em b}\kern-.08em
		T\kern-.1667em\lower.7ex\hbox{E}\kern-.125emX}}
\begin{document}
	
	\title{Robust Secure Transmission for Active RIS Enabled Symbiotic Radio Multicast Communications
	 \footnote{This article was presented in part at the IEEE VTC2022-Fall, Beijing, China, September 26-29, 2022 \cite{Conference}. }
}

\author{Bin Lyu,~\IEEEmembership{Member,~IEEE,}
 Chao Zhou,
 Shimin Gong,~\IEEEmembership{Member,~IEEE,}\\
 Dinh Thai Hoang,~\IEEEmembership{Senior Member,~IEEE,}
 and~Ying-Chang Liang, ~\IEEEmembership{Fellow,~IEEE}
\IEEEcompsocitemizethanks{\IEEEcompsocthanksitem B. Lyu and C. Zhou are with the Key Laboratory of Ministry of Education in Broadband Wireless Communication and Sensor Network Technology, Nanjing University of Posts and Telecommunications, Nanjing 210003, China (email: blyu@njupt.edu.cn, zoe961992059@163.com). 
 S. Gong is with School of Intelligent Systems Engineering, Sun Yat-sen University, Guangzhou 510275, China (email: gongshm5@mail.sysu.edu.cn).
 D. T. Hoang is with School of Electrical and Data Engineering, University of Technology Sydney, Sydney, NSW 2007, Australia (email: hoang.dinh@uts.edu.au). 
 Y.-C. Liang is with the Institute for Infocomm Research, A*STAR, Singapore (e-mail: liangyc@ieee.org).
}}
	\maketitle
	\begin{abstract}
		In this paper, we propose a robust secure transmission scheme for an active  reconfigurable intelligent surface (RIS) enabled symbiotic radio (SR)  system in the presence of multiple eavesdroppers (Eves). In the considered system, the active RIS is adopted to enable   
		the  secure transmission of primary signals   from the primary transmitter to multiple primary users in a multicasting manner, and simultaneously achieve its own information delivery to the secondary user by riding over the primary signals. Taking into account the imperfect channel state information (CSI) related with Eves, we formulate the system power consumption minimization problem by optimizing the transmit beamforming  and reflection beamforming  for  the bounded and statistical CSI error models, taking  the worst-case SNR constraints and the SNR outage probability constraints at the Eves into considerations, respectively. Specifically, the S-Procedure and the Bernstein-Type Inequality are implemented to approximately transform the worst-case SNR and the SNR outage probability constraints into tractable forms, respectively.
		 After that, the formulated problems can be solved by the proposed alternating optimization (AO) algorithm with the semi-definite relaxation and sequential rank-one constraint relaxation techniques. Numerical results  show that the proposed active RIS scheme can reduce up to 27.0\% system power consumption  compared to the passive RIS.
	\end{abstract}
	\begin{IEEEkeywords}
		Active reconfigurable intelligent surface, symbiotic radio, secure transmission, robust beamforming design. 
	\end{IEEEkeywords}
	
	\section{Introduction}
For the forthcoming 6G communications, the number of wireless devices is expected to be numerous for ubiquitous connections. However, in order to achieve this goal, there still exist some gaps needed to be addressed. On the one hand, it is extremely hard to meet 
 the requirement of spectrum resources for supporting the  ubiquitous connections as the required spectrum resources can be  nearly 76 GHz \cite{Spectrum}. 
 In the past decades, 
cognitive radio (CR) is considered to be a promising way to address the scarcity of spectrum resources, in which the spectrum bands assigned to the primary transmission can be opportunistically exploited by the secondary transmission. However, the secondary transmission could interfere the primary system inevitably, which fundamentally limits the practical implementation of CR \cite{LiangSurvey}. On the other hand, the spectrum sensing  functionality and active radio frequency (RF) components used in CR networks  result in high power consumption,  which disobeys the energy efficiency requirement for 6G communications. Thus, innovative technologies for achieving spectrum- and energy- efficient communications are in urgent needs.

Symbiotic radio (SR) has been proposed as a promising technology for 6G communications, which aims at constructing a  mutualism relationship between the primary and secondary transmissions under the quality of service (QoS) requirements of spectrum and energy efficiency \cite{LongIoT,GuoWCL,Chu,YongZeng}. In a typical SR system, secondary users (SUs) deliver their own information by passively backscattering the primary signals via periodically adjusting their load impedance \cite{LongIoT}. As such, the exclusive spectrum band for the secondary transmission is not necessary. Moreover, due to the passive backscattering characteristic, active RF components and dedicated incident signal sources can be avoided for the SUs, which is a key factor for obtaining a  satisfying energy efficiency. As a reward for  the primary transmission, the secondary transmission provides additional multi-path for enhancing its performance \cite{LiangSurvey}. In \cite{GuoWCL}, the SR  paradigm extended from backscatter communication was first proposed, and the symbiotic relationships between the primary and secondary transmissions, i.e., parasitic and commensal, were introduced. 
In \cite{LongIoT}, the  weighted sum-rate maximization problems were investigated for multiple-input-single-output (MISO) systems under both parasitic and commensal SR setups. In \cite{Chu}, the finite blocklength channel codes were used for  the secondary transmission in SR systems. In \cite{YongZeng}, a channel estimation method for cell-free SR systems was proposed, where the state of the SU is switched for  estimating the channel state information (CSI) of the direct links and the cascaded reflecting links. In \cite{LongIoT,GuoWCL,Chu,YongZeng}, backscatter devices each equipped with single antenna were served as the SUs for the secondary transmission, which limits the performance of the SR system from the following two aspects. Firstly, compared to the direct link for the primary transmission, the backscatter link suffers from the double-fading effect  such that the communication efficiency for the secondary transmission is unsatisfying.  Secondly, since the additional multi-path provided by the backscatter link  is quite weak, the  performance improvement for the primary transmission is insignificant.

Recently, reconfigurable intelligent surface (RIS)  has been widely used for performance enhancement of wireless communications \cite{Huang2019IRS,Gong,Wu2020Survey,PanSurvey}. An RIS is generally  equipped with massive reflecting elements, the amplitude reflection coefficients and phase shifts of which can be collaboratively and smartly adjusted for information reflection. According to the amplification characteristic, the RIS can be classified into three categories of passive RIS, hybrid RIS and active RIS \cite{Dai}. Specifically, the passive RIS can only reflect the incident signal without any amplification, i.e., the maximum amplitude reflection coefficient is one. In contrast, the active RIS can amplify the incident signals by consuming additional power \cite{ActiveRISLong}. While, the hybrid RIS is a combination of passive RIS and active RIS.
There exist a lot of works integrating the passive RIS into SR systems for enhancing the efficiency of both the primary and secondary transmissions   \cite{GangYang,QianqianZhang,HuaSR,HuaSRTwo,Broadcasting}.  In \cite{GangYang}, the passive  RIS was operated as a helper for assisting the information transmission from a primary transmitter (PT) to a primary user (PU) and vice versa. Alternatively,  the successive interference cancellation (SIC) and maximal ratio combining (MRC) techniques were used for signal decoding. Different from \cite{GangYang} in which the passive RIS was introduced into the existing SR systems, the passive RIS can also serve as an SU \cite{QianqianZhang,Broadcasting,HuaSR,HuaSRTwo}. Specifically, the passive RIS can not only act as a helper for the primary transmission, but also modulate the primary signal for delivering its own information \cite{QianqianZhang}. In \cite{HuaSR}, an unmanned aerial
vehicle (UAV) was leveraged in RIS assisted SR systems, for which the  UAV trajectory is optimized for performance enhancement. In \cite{HuaSRTwo}, the bit error rate minimization problem was  studied under the constraints about primary transmission rates for  both
parasitic and commensal SR setups. In \cite{Broadcasting}, a broadcasting communication scheme was proposed for the scenario with multiple PUs. In the prior works  \cite{GangYang,QianqianZhang,HuaSR,HuaSRTwo,Broadcasting}, since the passive RIS cannot amplify the incident signals, it is impossible  to eliminate the double-fading effect fundamentally, which  results in the  limited performance improvement. To break this fundamental limitation, the active RIS with amplification capability has been investigated in the literature \cite{ActiveRISLong,YouActive,PanActive}. In \cite{ActiveRISLong}, the active RIS was  proposed, where each reflecting element equipped with active loads can reflect and amplify the incident signal. Specifically, an active RIS empowered single input multiple output (SIMO) communication system was investigated to boost the system spectrum and energy efficiency.
In \cite{YouActive},  the active RIS aided downlink and uplink communications were studied and the deployment insight of the active RIS was revealed. In \cite{PanActive}, the performance comparison between the passive RIS and active RIS under the same power budget was investigated, the results confirmed the performance superiority of using the active RIS. Inspired by this observation, it is nature to apply the active RIS to the SR systems to achieve satisfying system performance. Specifically, by exploiting  the amplification characteristic, the primary signal power reflected by the active RIS can be amplified,  thus improving the primary transmission efficiency. At the same time, the secondary transmission by riding over the primary signal can also be enhanced. More importantly, compared to the passive RIS, the additionally consumed power for amplification at the active RIS is limited \cite{PanActive,Amato}. Hence, it is worth investigating the active RIS enabled SR systems, which has not been considered in the literature yet. 

Furthermore, it  is well known that wireless medium is open and broadcasting, the useful information delivery in the SR systems may be  vulnerable to potential eavesdroppers (Eves). However, it may not be suitable for encrypting the primary signal to enable the secondary transmission, i.e., guaranteeing the successful decoding of the secondary signal at the SU. To ensure the security of wireless transmission, physical layer security (PLS) technique has been considered as  a promising solution, which utilizes the wireless channels' inherent characteristic to reduce the achievable signal-to-noise-ratio (SNR) at the  Eves. It is worth noting that the PLS technique can be naturally implemented in RIS assisted communication systems and has been widely studied \cite{Chu2020Secure,Wang,ActiveSecure,PanTwo,GeSecure}. In \cite{Chu2020Secure}, the passive RIS was used to guarantee the secure transmission for traditional multi-antenna communication systems under the existence of an Eve. In \cite{Wang}, the secure transmission performance was further improved by using the active RIS. In \cite{ActiveSecure}, the secure broadcasting scheme was proposed for passive RIS aided SR systems and the secrecy  rate was maximized by optimizing the hybrid precoder at the PT and reflection coefficients at the RIS.
However, the CSI was assumed to be perfectly acquired in \cite{Chu2020Secure,Wang,ActiveSecure}, which is very difficult in practice due to the fact that the Eves tend not to cooperate with the PT for channel estimation. Considering the imperfect CSI, the authors  in \cite{PanTwo} investigated the robust secure transmission design in passive RIS assisted MISO communication systems. In \cite{GeSecure}, the passive RIS enabled two-way secure transmission scheme was proposed under the bounded CSI error model. However, the study of secure transmission for  SR systems is still at its early stage, and  no works  have studied the robust secure transmission scheme for SR systems under the imperfect CSI scenarios, which motivates the study of this paper.

In this paper, we propose an active RIS enabled robust secure transmission scheme for  SR  systems in the presence of multiple Eves, which consists of one PT, one RIS, one SU and multiple PUs. Specifically, the RIS is deployed to assist the confidential information multicasting from the PT to the PUs against the interception from the Eves. In the meanwhile, the RIS modulates the primary signal using the binary phase shift keying (BPSK) technique to enable its own information delivery to the SU. 
Considering the fact that the legitimate users are willing to cooperate with the PT, the CSI related with the PUs and SU is considered to be   perfectly known.
Taking into account the difficulty of acquiring the CSI related with the Eves, we investigate the robust beamforming designs for both the bounded CSI error model and the statistical CSI error model.
Compared to \cite{GangYang,QianqianZhang,HuaSR,HuaSRTwo,Broadcasting}, this work focuses on the robust secure transmission design for the SR systems, which is much more challenging due to the  secure transmission constraints with the imperfect CSI models for the Eves. Furthermore,
in contrast to \cite{ActiveSecure}, our proposed secure transmission scheme is  robust to address the CSI errors and can avoid the design mismatch caused by the perfect CSI assumptions for the Eves.  It is noted that the secure transmission schemes  in \cite{PanTwo,GeSecure} are not applicable  due to the existence of both the primary transmission and secondary transmission. The main contributions of this work are summarized as follows:
 \begin{itemize}
\item{To the best of our knowledge, this work is the first to investigate the robust secure transmission scheme for   RIS enabled SR systems by taking into account the CSI errors related with the Eves. In particular, the active RIS is adopted to not only establish  the robust secure transmission from the PT to the PUs against the  presence of Eves, but also achieve the secondary information delivery from itself to the SU. Moreover, for both the bounded CSI error model and the statistical CSI error model, the robust secure transmission designs  are proposed, respectively.}
\item{For the bounded CSI error model, we aim to minimize the total system power consumption  by jointly optimizing the transmit beamforming at  the PT and reflection beamforming at the active RIS under the worst-case SNR constraints with imperfect CSI at the Eves and minimum achievable SNR constraints at the PUs and SU. To address the non-convexity of the formulated problem, we propose an alternating optimization (AO) algorithm to optimize the transmit beamforming and reflection beamforming in an alternating manner, for which the S-Procedure is implemented to approximately transform the worst-case SNR constraints  into  equivalent linear matrix inequalities (LMIs). For the optimization of the transmit beamforming, we utilize  the semi-definite relaxation (SDR) to relax the rank-one constraint and  prove its tightness for obtaining the optimal solution. While, the sequential rank-one constraint relaxation (SROCR) technique is adopted to obtain a sub-optimal solution to the reflection beamforming.}
\item{For the statistical CSI error model, we formulate a system power consumption minimization problem subject to the SNR outage probability constraints at the Eves and minimum achievable SNR constraints at the PUs and SU.  Specifically, the SNR outage probability constraints  are properly approximated  into  tractable forms by using the Bernstein-Type Inequality. In addition, the AO algorithm with the SDR and SROCR techniques is also proposed to solve the formulated problem.}

 \item{  Numerical results are demonstrated to evaluate the performance of the proposed schemes. To show the effectiveness  of using the active RIS, the robust secure transmission design for the passive/hybrid RIS enabled SR systems is discussed. Numerical results confirm that compared to the passive RIS, the active RIS can  reduce up to 27.0\% system power consumption. In addition, the proposed scheme under the statistical CSI error model always consumes less power than that under the bounded CSI error model.}

 \end{itemize}

\begin{figure}
	\centering
	\includegraphics[width=0.7 \linewidth]{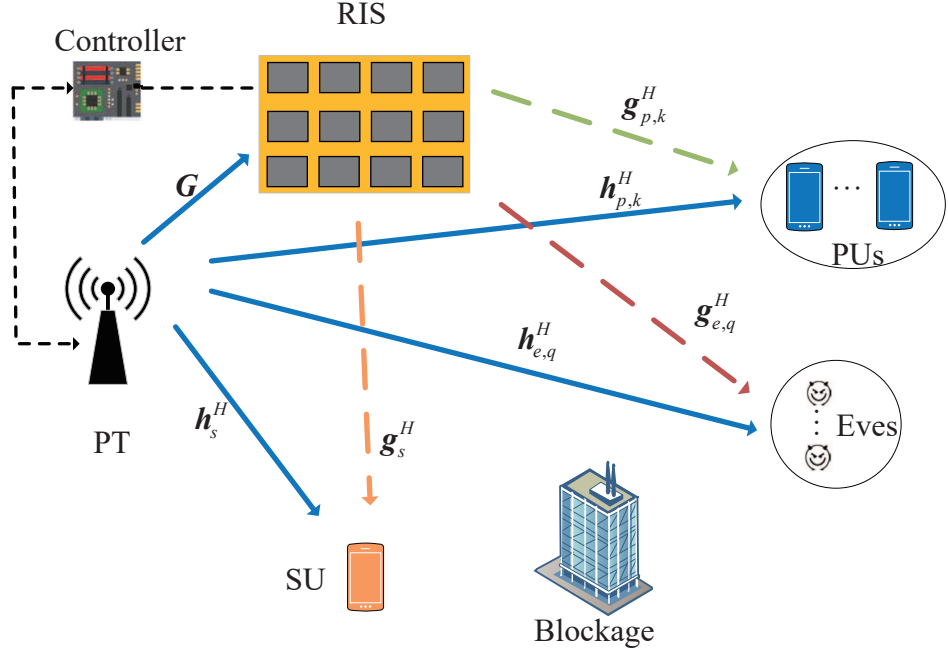}
	\caption{An RIS enabled secure SR system.}
	\label{Illustration}
\end{figure}

The rest of this paper is organized as follows. Section \ref{sysmod} describes the system model. In Section  \ref{BoundedCSI} and Section \ref{StatisticalCSI}, the robust beamforming designs for  the bounded and statistical CSI error models are investigated, respectively.  Section  \ref{NumericalResults} provides numerical results for performance verification. The conclusion of this paper is summarized in Section \ref{Conculsions}.

	\section{System Model}
	\label{sysmod}
	As shown in Fig. \ref{Illustration},  we consider the secure communication of an active RIS enabled SR multicast system\footnote{The system model can  be degenerated to the scenario with a passive RIS, and the proposed scheme in this paper is still applicable, which will be briefly discussed in Section \ref{SectionBechmark}.},
	 which consists of  one PT equipped with $N$ ($N \ge 1$) antennas, one active RIS equipped with $M$ ($M \ge 1$) reflecting elements, $K$ ($K \ge 1$) single-antenna PUs, one single-antenna SU, and $Q$ ($Q \ge 1$) single-antenna Eves. There exist the primary transmission from the PT to the PUs and the secondary transmission from the RIS to the SU. Specifically,  the PT transmits primary signals containing confidential information  to all the PUs in  a  multicast communication manner. In the system under considerations, the Eves are assumed to be located near the PUs to intercept the information transmitted to them. In this context, the RIS is deployed to establish reliable links from the PT to the PUs, and at the same time deliver its own information to the SU by riding over the primary signals. We consider that the links between the Eves and the SU are unavailable due to the blockages between them (as the main goal of the Eves is to intercept the information transmitted to the PUs), thus the received information at the SU cannot be intercepted by the Eves. We denote the set of antennas at the PT, the set of RIS reflecting elements, the set of PUs, and  the set of Eves as $\mathcal{N} = \{1,\ldots,N \}$, $\mathcal{M} = \{1,\ldots,M\}$,  $\mathcal{K} = \{1,\ldots,K\}$, and $\mathcal{Q} = \{1,\ldots,Q\}$,  respectively.

	We consider a block flat-fading channel, i.e., the CSI is invariant during one secondary symbol period \cite{QianqianZhang}. The complex baseband equivalent channel from the PT to the RIS, from the PT to the $k$-th PU, from the PT to the SU, from the PT to the $q$-th Eve, from the RIS to the $k$-th PU, from the RIS to the $q$-th Eve, and from the RIS to the SU are denoted by  ${\bm{G}} \in {{\mathcal{C}}^{M \times {N}}}$, $\bm{h}_{p,k}^H \in \mathcal{C}^{1\times N }$, $\bm{h}_s^H \in \mathcal{C}^{1 \times N}$, $\bm{h}_{e,q}^H \in \mathcal{C}^{1 \times N}$, ${\bm{g}}_{p,k}^H \in {{\mathcal{C}}^{1 \times M}}$, ${{\bm{g}}_{e,q}^H} \in {{\mathcal{C}}^{1 \times M}}$, and ${{\bm{g}}_{s}^H} \in {{\mathcal{C}}^{1 \times M}}$,  respectively, where $k \in \mathcal{K} $, and $q \in \mathcal{Q}$.

\subsection{Signal Transmission Model}	
We consider the commensal SR setup, i.e., the symbol rate of the primary transmission  is much larger than that of the secondary transmission \cite{LongIoT}. In particular, $L$  ($L \gg 1$) primary symbols can be transmitted by the PT during one secondary symbol period. Denote the $l$-th symbol of the primary and secondary transmissions  as $s(l)$ and $c$, respectively, where $s(l) \sim \mathcal{CN} (0,1)$, $l \in \mathcal{L}$, and $\mathcal{L} = \{1,\ldots,L \}$ represents the set of all primary symbols.  We assume that the RIS modulates its own information over the primary information by the  BPSK scheme, i.e., $c \in \{-1,1\}$. It is worth noting that the BPSK modulation scheme  is practical and widely used for the RIS due to its reflection characteristic \cite{QianqianZhang,Broadcasting}, which can be achieved by adjusting the phase shifts.

We denote the amplitude reflection coefficient and  phase shift of the $m$-th reflecting element at the RIS as $\beta_{m}$ and  $\theta_{m}$, respectively, where $\beta_m \in [0,\eta]$, $\theta_m \in [0,2\pi)$, and $ \eta $ is the  maximum amplification factor of the reflecting elements. The reflection coefficient for the $m$-th reflecting element is thus expressed as $c v_m$ with $v_m = \beta_m e^{j\theta_{m}}$, where $m\in \mathcal{M}$.
At the cost of additional power consumption, the active RIS can amplify the incident signals and the maximum amplification factor is generally not smaller than one, i.e.,   $ \eta \ge  1 $  \cite{Dai,ActiveRISLong}.

During one  secondary symbol period of interest, the $l$-th signal received by the $k$-th PU can be expressed as
	\begin{equation}
		\begin{aligned}	
		\label{signalpk}
		y_{p,k}(l) &= \bm{h}_{p,k}^H \bm{w} s(l) + \bm{g}_{p,k}^H (c \bm{\Phi})   \bm{G} \bm{ w} s(l) \\
		 &+\bm{g}_{p,k}^H (c \bm{\Phi})  {\bm{n}_I(l) } + n_{p,k}(l), ~l \in \mathcal{L}, k \in \mathcal{K},
		\end{aligned}
	\end{equation}
where $\bm{w} \in \mathcal{C}^{N\times 1}$ is the transmit beamforming vector at the PT, $\bm \Phi \in {{\bf{C}}^{M \times {M}}} $ is the diagonal matrix with $v_m$ being its $m$-th diagonal element, ${\bm{n}}_I(l) \sim \mathcal{CN}(0,\sigma_{I}^2{\bm{I}_M}) $ is the thermal noise generated at the RIS due to the existence of amplifiers{\footnote{ Due to the amplify characteristic of the active RIS, the thermal noise cannot be ignored \cite{Dai,ActiveRISLong}.}, and $ {n_{p,k}(l) \sim \mathcal{CN}(0,\sigma_{p,k}^2)} $ is the additional Gaussian white noise (AWGN) at the $k$-th PU. Since the duration of the secondary symbol is much longer than that of the primary symbol, $ \bm g_{p,k}^H ( c\bm{\Phi} ) \bm G $ contained in \eqref{signalpk} can be considered as a slowly varying channel\cite{LongIoT}. Thus, we can consider that $ s(l) $  is transmitted from the PT to  the $k$-th PU through the equivalent channel $\bm h_{p,k}^H+ \bm g_{p,k}^H ( c\bm{\Phi } )\bm {G} $. Then, 
with a fixed $c$,	the instantaneous SNR for  decoding $ s(l) $ at the $k$-th PU can be expressed as  
	$ \bar{\gamma}_{p,k} (c) = \frac{{\left|  \left( \bm h_{p,k}^H + \bm{g}_{p,k}^H ( c\bm{\Phi } ) \bm G \right) \bm w \right|}^2 } {\sigma_{p,k}^2 + \sigma_I^2 || \bm g_{p,k}^H \bm{ \Phi } c  ||_2^2 }. $
 Since the  BPSK modulation scheme is applied at the RIS, the average SNR for the $k$-th PU  to decode $ s(l) $ is given by 
		\begin{align}
			\label{SNRPU}
		&{\gamma_{p,k} } = {E_c}  [ \bar{\gamma}_{p,k} (c)  ] \nonumber \\
		&= \frac{{\left|  \left( \bm h_{p,k}^H + \bm g_{p,k}^H {\bm \Phi } \bm G \right)\bm w \right|}^2  }{2\left(   {\sigma_{p,k}^2 +\sigma_I^2 || \bm g_{p,k}^H { \bm \Phi }  ||_2^2 }\right) }+\frac{{\left|  \left( \bm h_{p,k}^H - \bm g_{p,k}^H{\bm \Phi } \bm G \right)\bm w \right|}^2  }{2\left(   {\sigma_{p,k}^2 +\sigma_I^2 || \bm g_{p,k}^H { \bm \Phi }  ||_2^2 }\right)},~k \in \mathcal{K}.
		\end{align}

	Similarly, the $l$-th signal received by the $q$-th Eve  can be expressed as
	\begin{equation}\label{signaleq}
		\begin{aligned}		
		{y_{e,q}}(l) &= \bm h_{e,q}^H\bm ws(l) + \bm g_{e,q}^H\left( {c\bm \Phi } \right)\bm G \bm ws(l) \\
		&+\bm g_{e,q}^H ( c \bm{\Phi} ) {{\bm{n}}_I(l)}+ n_{e,q}(l),~l \in \mathcal{L}, q \in \mathcal{Q},
		\end{aligned}
	\end{equation}
	where $ {n_{e,q}(l)\sim \mathcal{CN}(0,\sigma_{e,q}^2)} $ is the AWGN at the $q$-th Eve. The Eves aim at intercepting the primary information, i.e., $s(l)$. Thus, the average SNR for the $q$-th Eve to  decode $ s(l) $ can be written as
	\label{SNREQ}
		\begin{align}
			&{\gamma_{e,q}} = {E_c}\left[ \frac{{\left|  \left( \bm h_{e,q}^H + \bm g_{e,q}^H\left( {c\bm \Phi } \right)\bm G \right) \bm w \right|}^2 }{\sigma_{e,q}^2 +\sigma_I^2 || \bm g_{e,q}^H { \bm \Phi } c ||_2^2 }  \right   ] \nonumber \\
			&= \frac{{\left|  \left( \bm h_{e,q}^H + \bm g_{e,q}^H {\bm \Phi } \bm G \right)\bm w \right|}^2  }{2\left(   {\sigma_{e,q}^2 +\sigma_I^2 || \bm g_{e,q}^H { \bm \Phi }  ||_2^2 }\right) }+\frac{{\left|  \left( \bm h_{e,q}^H - \bm g_{e,q}^H{\bm \Phi } \bm G \right)\bm w \right|}^2  }{2\left(   {\sigma_{e,q}^2 + \sigma_I^2 || \bm g_{e,q}^H { \bm \Phi } ||_2^2 }\right)},~q \in \mathcal{Q}.
		\end{align}

	We denote the received signal at the SU as $ {y_{s}}(l) $, which is formulated as 
	\begin{equation}\label{signals}
		\begin{aligned}
		{y_{s}}(l) &= \bm h_{s}^H\bm ws(l) + \bm g_{s}^H\left( {c\bm \Phi } \right)\bm G \bm ws(l) \\
		&+\bm g_{s}^H (c \bm{\Phi} ) {{\bm{n}}_I(l)} + n_{s}(l), ~l \in \mathcal{L}, 
		\end{aligned}
	\end{equation}
	where $ {n_{s}(l) \sim \mathcal{CN}(0,\sigma_{s}^2)} $ is the AWGN at the SU. According to\cite{QianqianZhang}, the maximum likehood detection technique can be adopted to decode $s(l)$ and $c$ jointly. Similarly, the average SNR for decoding $ s(l) $  at the SU, denoted by  ${\gamma_{s}^{(s)}} $, is given by
	\begin{equation}
	\label{SNRSS}
		\begin{aligned}
			{\gamma_{s}^{(s)}} = \frac{{\left|  \left( \bm h_{s}^H + \bm g_{s}^H {\bm \Phi } \bm G \right)\bm w \right|}^2  }{2\left(   {\sigma_{s}^2 +\sigma_I^2 || \bm g_{s}^H { \bm \Phi } ||_2^2 }\right) }+\frac{{\left|  \left( \bm h_{s}^H - \bm g_{s}^H{\bm \Phi } \bm G \right)\bm w \right|}^2  }{2\left(   {\sigma_{s}^2 +\sigma_I^2 || \bm g_{s}^H { \bm \Phi }  ||_2^2 }\right)}.
		\end{aligned}
	\end{equation}
	After decoding $s(l)$ at the SU, the SIC technique is applied to remove the first term of \eqref{signals}, i.e., $ \bm h_{s}^H\bm ws(l) $. After that, 
 the intermediate signal received by the SU in the considered secondary symbol period, denoted by $\hat{\bm{y}}_s=[\hat{y}_s(1),\ldots,\hat{y}_s(L)]^T$, can be formulated as 
 \begin{align}
 	&[\hat{y}_s(1),\ldots,\hat{y}_s(L)]^T =  \bm{g}_{s}^H \bm{\Phi} \bm{G} \bm{w}  [s(1),\ldots,s(L)]^T c  \nonumber \\
 	&+  [\bm g_{s}^H { \bm \Phi } {\bm{n}}_I(1), \ldots,\bm g_{s}^H { \bm \Phi }{\bm{n}}_I(L) ]^T c +  [n_{s}(1),\ldots,n_{s}(L)]^T. \nonumber
 \end{align}

By using the MRC technique, the SNR for decoding $c$ at the SU is given by 
	\begin{equation}
	\label{SNRSC}
		\begin{aligned}
			\gamma_{s}^{(c)} &=\frac{\sum \limits_{l=1}^L{\left|  \bm g_{s}^H {\bm \Phi } \bm G \bm w s(l)\right| }^2  }{\sigma_s^2 +\sigma_I^2 || \bm g_{s}^H { \bm \Phi } ||_2^2}   \overset{(a)}{\approx}\frac{L{\left|  \bm g_{s}^H {\bm \Phi } \bm G \bm w \right| }^2}{\sigma_s^2+\sigma_I^2 || \bm g_{s}^H { \bm \Phi } ||_2^2},  
		\end{aligned}
	\end{equation}
	where $(a)$	holds due to the fact that the arithmetic mean and the statistical expectation are approximately  equal if $L \gg 1$ \cite{QianqianZhang,Broadcasting}.


\subsection{System Power Consumption}
For the system with the active RIS, the total power consumption, denoted by $P_\text{tol}^\text{active}$, is expressed as 
\begin{align}
\label{TotalPower}
P_\text{tol}^\text{active}=P_\text{PT}+P_\text{RIS}^\text{active},
\end{align}
where $P_\text{PT} = ||\bm{w} ||^2$ is the transmit power of the PT, and $P_\text{RIS}^\text{active}$ denotes the power consumption of the active RIS. {According to \cite{ActiveRISLong}, for the active RIS with $M$ reflecting elements, its circuit power consumption can be estimated by
\begin{align}
\label{PowerRIS}
P_\text{RIS}^\text{active}=M P_\text{SW}+ M P_\text{DC} +P_\text{out}^\text{active},
\end{align}
where $P_\text{SW}$ is the power consumed by switching and controlling each reflecting element, $P_\text{DC}$ is  the direct current biasing
power at each reflecting element, and $P_\text{out}^\text{active}$ represents the output power at the active RIS and is formulated as  \cite{ActiveRISLong}} 
\begin{align}
\label{RISOutput}
{P_\text{out}^\text{active} = ||\bm{\Phi} \bm{G}\bm{w}  ||_2^2 + \sigma_I^2 || \bm{\Phi} ||_F^2.}
\end{align}
{From \eqref{TotalPower}-\eqref{RISOutput}, it is observed that the system power consumption is  controlled by the design of  transmit beamforming and reflection beamforming.}

\subsection{CSI Uncertainty Model}
In the literature, there are many  channel estimation methods proposed for RIS-assisted communication systems \cite{ChannelEstimation,ChannelEstimationCui,ChangshengOne,ChangshengTwo}. Specifically, the PUs and SU are willing to cooperate with the PT directly or through the RIS for obtaining the downlink CSI. Hence, it is reasonable to assume that the perfect CSI with respect to the direct and cascaded links of the PUs and SU is available at the PT, which is a common assumption for achieving the upper bound on system performance \cite{GangYang,QianqianZhang,HuaSR,HuaSRTwo,Broadcasting,Derrick}. 
 However, as the Eves tend not to cooperate with the PT for channel estimation. It is  challenging for acquiring the CSI related with the Eves.\footnote{For the scenario that the Eves are part of a multi-group primary communication system, the Eves and PUs can communicate with the PT in alternate secondary symbol periods. However, the Eves have no permission to access the information transmitted to the PUs and aim to intercept the information transmitted to the PUs when the PUs are being served. When the Eves are being served, their CSI can be acquired approximately.}  To take into account this issue, we adopt two widely-used error models, i.e., the bounded CSI error model and the statistical CSI error model, to  capture the  CSI uncertainty of $\bm{h}_{e,q}^H $ and ${\bm{g}}_{e,q}^H$, respectively.

\subsubsection{Bounded CSI Error Model} In this model, the maximum norm of all CSI estimation errors is bounded by an uncertainty radius, which is adopted to measure the level of CSI uncertainty. Specifically,
 $\bm{h}_{e,q}^H $ and  ${\bm{g}}_{e,q}^H$ can be respectively expressed as 
\begin{align}
\label{CSIheq}
	&\bm{h}_{e,q}=\hat {\bm{h}}_{e,q}+ \Delta \bm{h}_{e,q},
	~\Delta \bm{h}_{e,q}  \in {\mathcal{C}^{N \times {1}}}, \nonumber \\ 
	&~~~~~~~~{|| \Delta \bm{h}_{e,q}||_2 } \le \xi_{h,q}, ~\forall q \in \mathcal{Q},\\
\label{CSIgeq}
		&\bm{g}_{e,q}=\hat {\bm{g}}_{e,q}+\Delta \bm{g}_{e,q},
		~\Delta \bm{g}_{e,q}  \in {{\mathcal{C}}^{M \times {1}}}, \nonumber \\
		&~~~~~~~~ {|| \Delta \bm{g}_{e,q}||_2 } \le \xi_{g,q}, ~\forall q \in \mathcal{Q}, 
\end{align}
 where $ \hat{\bm{h}}_{e,q} $ and $ \hat{\bm{g}}_{e,q} $ are  the estimated CSI  of $\bm{h}_{e,q}$ and $\bm{g}_{e,q}$ known at the PT, respectively, $ \Delta \bm{h}_{e,q} $ and  $ \Delta \bm{g}_{e,q} $ are the corresponding CSI estimation errors,  $ \xi_{h,q} $ and $ \xi_{g,q} $  are the  uncertainty radii  known at the PT.

\subsubsection{Statistical CSI Error Model} In this model,
	 {the CSI estimation errors of $\bm{h}_{e,q}^H $ and  ${\bm{g}}_{e,q}^H$ are assumed to follow the circularly symmetric complex Gaussian (CSCG) distribution \cite{PanTwo,Pan}}, i.e.,
	\begin{equation}\label{StatisticCSIError}
		\begin{aligned}
			& \Delta \bm{h}_{e,q}\sim \mathcal{CN}(\bm 0, \bm{\Sigma}_{h,q}), ~
			\bm \Sigma_{h,q}\succeq 0, ~\forall q \in \mathcal{Q}, \\
			& \Delta \bm{g}_{e,q}\sim \mathcal{CN}(\bm 0, \bm \Sigma_{g,q}), ~
			\bm \Sigma_{g,q}\succeq 0,  ~\forall q \in \mathcal{Q},
		\end{aligned}
	\end{equation}
	where $  \bm \Sigma_{h,q} \in {{\bf{C}}^{N \times {N}}} $ and $ \bm \Sigma_{g,q} \in {{\bf{C}}^{M \times {M}}} $ are positive semidefinite error covariance matrices for  $\bm{h}_{e,q}^H $ and  ${\bm{g}}_{e,q}^H$, respectively.

	In the following, we  investigate the robust design of transmit beamforming at the PT  and reflection beamforming at the active RIS to minimize  the system power consumption based on the bounded CSI error model and the statistical CSI error model, respectively.

	\section{Robust beamforming design for the bounded CSI error model}
	\label{BoundedCSI}

	In this section, for the bounded CSI error model,  we aim at minimizing the total system power consumption defined in \eqref{TotalPower}  by optimizing the transmit beamforming and  the reflection beamforming  including amplitude reflection coefficients and phase shifts. {The study of this model is conservative, which guarantees the system performance for the worst-case design.}
 The optimization problem for this model is formulated as
		{\begin{align}
			\min_{\bm{w},\bm{\Phi}} &~~  \|\bm{w} \|_2^2+M\left( P_{SW}+P_{DC}\right) + ||\bm{\Phi} \bm{G}\bm{w}  ||_2^2 + \sigma_I^2 || \bm{\Phi} ||_F^2 \nonumber \\ 
			\text{s.t.}~~ & \text{C1:~} \gamma_{p,k} \ge \gamma_{s},~ k\in\mathcal{K}, \nonumber \\
			& \text{C2:~} \gamma_{e,q} \le  \gamma_{e}, \nonumber\\
			&{|| \Delta \bm{h}_{e,q}||_2 } \le \xi_{h,q}, ~{|| \Delta \bm{g}_{e,q}||_2 } \le \xi_{g,q},
			~q \in \mathcal{Q}, \nonumber \tag{$\textbf{P1}$}   \\ 
			& \text{C3:~} \gamma_{s}^{(s)}\ge \gamma_{s},\nonumber\\
			& \text{C4:~} \gamma_{s}^{(c)}\ge \gamma_{c},\nonumber\\
			& \text{C5:~}0 \le \beta_m \le \eta, ~0 \le \theta_m < 2 \pi, m \in \mathcal{M},\nonumber
		\end{align}}
where $\gamma_{s}$ and $\gamma_{c}$ are the decoding SNRs required for the primary and secondary transmissions, respectively,  $\gamma_e$ is the maximum SNR threshold at the Eves. Specifically, the constraint C1 indicates the minimum SNR requirement at the PT for decoding $s(l)$,
 the constraint C2 is used for guaranteeing the secure transmission by controlling the maximum SNR achieved at the Eves under the bounded CSI error model, {the constraint C3 guarantees that the primary signals can be successfully decoded at the SU and then the SIC technique can be implemented correctly \cite{QianqianZhang,Broadcasting},} the constraint C4 shows the minimum SNR requirement at the SU for decoding $c$,  and the constraint C5  indicates the reflection coefficients of the RIS.

 It is obvious that  \textbf{P1}  is a non-convex optimization problem due to the coupling of $ \bm w $ and $ \bm \Phi $ in the  constraints, for which the globally optimal solution is challenging to obtain. We thus propose an AO algorithm consisting of two-layer iterations. Specifically, in the outer iteration, we optimize the transmit beamforming (i.e., $\bm{w}$) and reflection beamforming (i.e., $\bm{\Phi}$) in two sub-problems alternately. While in the inner layer iteration, we propose to use the SROCR technique to accurately find a rank-one solution of the reflection beamforming.
  Before showing the solutions, we first introduce the S-Procedure in Lemma \ref{S-Procedure}, which will be used for the reformulation of the constraint C2 in Section \ref{ActiveBeamBounded} and Section \ref{PassiveBeamBounded}, respectively.

	\begin{lemma}
		\label{S-Procedure}
		(S-Procedure \cite{S-procedure})  Let $\bm F_\tau \in  \mathbb{H}^{n\times n }$, $\bm g_\tau \in  \mathbb{C}^n$, $h_\tau \in  \mathbb{R}$, where $\tau = \{1,2\}$. Suppose there exists a point $\hat{\bm{x}}$ satisfying 
		$$\hat{\bm{x}}^H \bm F_1 \hat{\bm{x}} + 2 \text{Re}(\bm{g}_1^H \hat{\bm{x}} ) +h_1 < 0. $$
Then, the implication 
	\end{lemma}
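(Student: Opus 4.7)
The plan is to establish the classical Yakubovich S-Procedure, whose conclusion (omitted at the cutoff) asserts that under the Slater-type hypothesis of a strictly feasible $\hat{\bm{x}}$, the displayed implication on the two Hermitian quadratic forms is equivalent to the existence of a multiplier $\lambda \ge 0$ for which
\[
\lambda \begin{bmatrix} \bm F_1 & \bm g_1 \\ \bm g_1^H & h_1 \end{bmatrix} - \begin{bmatrix} \bm F_2 & \bm g_2 \\ \bm g_2^H & h_2 \end{bmatrix} \succeq \bm 0.
\]
I would split the proof into the two standard directions, treating sufficiency as essentially a one-line substitution and devoting the real work to necessity.

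First I would verify sufficiency by a direct quadratic substitution. Setting $\tilde{\bm{x}} = [\bm{x}^T, 1]^T$ and sandwiching the LMI by $\tilde{\bm{x}}^H (\cdot)\, \tilde{\bm{x}}$ produces $\lambda q_1(\bm{x}) \ge q_2(\bm{x})$, where for brevity I write $q_i(\bm{x}) := \bm{x}^H \bm F_i \bm{x} + 2\,\mathrm{Re}(\bm g_i^H \bm{x}) + h_i$. Combined with $q_1(\bm{x}) \le 0$ and $\lambda \ge 0$, this immediately yields $q_2(\bm{x}) \le 0$; note that the Slater hypothesis plays no role in this direction.

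The main obstacle is the necessity direction, which I would handle via the hidden convexity of Hermitian quadratic forms. The key ingredient is that the joint image $\mathcal{R} = \{(q_1(\bm{x}), q_2(\bm{x})) : \bm{x} \in \mathbb{C}^n\} \subset \mathbb{R}^2$ is convex; this is the complex analogue of the Toeplitz--Hausdorff theorem and is precisely the reason the S-Procedure is lossless in the Hermitian setting, in contrast to the real symmetric case where a richer argument would be required. The hypothesized implication forces $\mathcal{R}$ to be disjoint from the open region $\{(s,t) : s \le 0,\, t > 0\}$, so a separating-hyperplane argument in the plane produces nonnegative scalars $\mu_1, \mu_2$, not both zero, with $\mu_1 q_1(\bm{x}) \ge \mu_2 q_2(\bm{x})$ for every $\bm{x} \in \mathbb{C}^n$. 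The Slater point, satisfying $q_1(\hat{\bm{x}}) < 0$, rules out $\mu_2 = 0$, and setting $\lambda := \mu_1 / \mu_2 \ge 0$ and rewriting the resulting quadratic inequality in homogeneous form recovers the LMI above. The most delicate technical step is the convexity of $\mathcal{R}$, which I would either cite or derive through a short spectral argument on the pencil $\mu_1 \bm F_1 - \mu_2 \bm F_2$. Since Lemma \ref{S-Procedure} is quoted verbatim from \cite{S-procedure}, in the paper itself it suffices to attribute the derivation; what matters for Sections \ref{ActiveBeamBounded} and \ref{PassiveBeamBounded} is that the lemma must subsequently be applied to recast the semi-infinite constraint C2, quantified over $\|\Delta \bm{h}_{e,q}\|_2 \le \xi_{h,q}$ and $\|\Delta \bm{g}_{e,q}\|_2 \le \xi_{g,q}$, as a finite-dimensional LMI involving one nonnegative multiplier per Eve.
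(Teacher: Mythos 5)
The paper itself offers no proof of Lemma~\ref{S-Procedure}: it is quoted as a known result from \cite{S-procedure} and used as a black box to convert the semi-infinite constraint C2 into the LMIs \eqref{LMIs1} and C15. Your reconstruction of the standard argument is correct and complete in outline: the sufficiency direction via congruence with $\tilde{\bm{x}}=[\bm{x}^T,1]^T$, and the necessity direction via hidden convexity, planar separation from the convex set $\{(s,t): s\le 0,\ t>0\}$, and the Slater point to exclude a degenerate multiplier, followed by homogenization (using $|t|^2$-scaling for $t\neq 0$ and a limiting argument for $t=0$) to pass from $\lambda q_1(\bm{x})\ge q_2(\bm{x})$ on $\mathbb{C}^n$ to the matrix inequality. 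One attribution in your sketch is slightly off: the convexity of the joint image of two \emph{inhomogeneous} Hermitian quadratics is not literally the Toeplitz--Hausdorff theorem (which concerns the homogeneous two-form numerical range on the unit sphere); the clean route is to homogenize to three homogeneous Hermitian forms on $\mathbb{C}^{n+1}$, invoke the convexity of their joint range (Au-Yeung--Poon / Fradkov--Yakubovich), and slice at $|t|^2=1$, noting that the phase of $t$ can be absorbed into $\bm{x}$. This is a citable fact and you flag that you would cite or derive it, so it is an imprecision rather than a gap. Relative to the paper, your approach simply supplies the proof the paper delegates to the reference; it buys self-containedness at the cost of importing the hidden-convexity machinery, and it correctly identifies the one place where the Hermitian setting is essential for the relaxation to be lossless.
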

	\begin{equation*}
		\begin{aligned}
			&~~~~ \bm x^H \bm F_1 \bm x + 2 \text{Re}(\bm {g_1}^H \bm x) +h_1 \le 0 \\& \Rightarrow \bm x^H \bm F_2 \bm x + 2 \text{Re}(\bm {g_2}^H \bm x) +h_2 \le 0 
		\end{aligned}
	\end{equation*}
	 holds if and only if there exists a $ \varpi \ge 0 $ such that
	\begin{equation*}
		\begin{aligned}\label{}
			\varpi &\left[ {\begin{array}{*{20}{c}}
					{{\bm F_1}}& \bm  g_1\\
					\bm {g_1}^H&{ h_1}
			\end{array}} \right] \succeq \left[ {\begin{array}{*{20}{c}}
					{{\bm F_2}}& \bm  g_2\\
					\bm {g_2}^H&{ h_2}		
			\end{array}} \right].\\
		\end{aligned}
	\end{equation*} 
\subsection{Transmit Beamforming Optimization} 
\label{ActiveBeamBounded}
In this subsection, we aim to optimize the transmit beamforming vector $\bm w$ with $\bm{\Phi}$ being fixed for the bounded CSI error model. To make the sub-problem tractable, we first introduce an auxiliary variable for substitution.
We denote $\bm{W} = \bm w \bm w^H$ with Rank($\bm W$)=1 and $\bm W \succeq \bm 0$. 
Then, the constraints C1, C3 and C4 can be rewritten as 
\begin{align*}
&\text{C6:} ~\text{Tr}\left( \bm H_{p,k} \bm W\right)+  \text{Tr}  \left( \bm G_{p,k} \bm W\right) \ge \gamma_s  \kappa_{p,k},  ~k\in\mathcal{K}, \\
& \text{C7:}~ \text{Tr}\left( \bm H_{s} \bm W\right)+  \text{Tr}  \left( \bm G_{s} \bm W\right) \ge \gamma_s  \kappa_{s},\\
& \text{C8:} ~ L\text{Tr} \left( \bm G_s \bm W\right) \ge  \gamma_c  \kappa_{s},
\end{align*}
respectively, where 
$\bm H_{p,k}= \bm h_{p,k} \bm {h}_{p,k}^H$, 
$\bm{H}_{s}= \bm{h}_{s} \bm{h}_{s}^H$,
$\bm G_{p,k} ={ \left( \bm g_{p,k}^H {\bm \Phi } \bm G\right)}^H  \left( \bm g_{p,k}^H {\bm \Phi } \bm G \right) $, 
$\kappa_{p,k}= {\sigma_{p,k}^2 + \sigma_I^2|| \bm g_{p,k}^H { \bm \Phi }  ||_2^2 }$,  
$\bm G_{s} ={ \left( \bm g_{s}^H {\bm \Phi } \bm G\right)}^H  \left( \bm g_{s}^H {\bm \Phi } \bm G \right) $, 
 and  $\kappa_{s}= {\sigma_{s}^2 + \sigma_I^2 || \bm g_{s}^H { \bm \Phi }  ||_2^2 }$. Similarly, the constraint C2 can be reformulated as 
		\begin{align}\label{QMI}
			&{\Delta \bm x}_q^H\tilde {\bm W}{\Delta \bm x_q}+{\Delta \bm x}_q^H\tilde {\bm W}\hat{\bm x}_q+\hat{\bm x}_q^H\tilde {\bm W}{\Delta \bm x}_q \\
			&+\hat{\bm x}_q^H\tilde {\bm W}\hat{\bm x}_q \le \gamma_{e}\sigma_{e,q}^2, ~ ||{\Delta {\bm x}_q} ||_2^2\le \xi _{h,q}^2 + \xi _{g,q}^2, ~ q\in\mathcal{Q},\nonumber
		\end{align}
where 
$\tilde {\bm W} = \left({\begin{array}{*{20}{c}}
		{ \bm J_w }& \bm 0\\
		\bm 0&{\bm W}
\end{array}}\right)$,
$ \bm x_q = \left( {\begin{array}{*{20}{c}}
			{{\bm g_{e,q}}}\\
			{{\bm h_{e,q}}}
	\end{array}} \right) $, 
$\hat{\bm x}_q  = \left( {\begin{array}{*{20}{c}}
			{{  \hat{\bm g}_{e,q}}}\\
			{{  \hat{\bm h}_{e,q}}}
	\end{array}} \right) $, 
$\Delta \bm x_q = \left( {\begin{array}{*{20}{c}}
			{{\Delta  \bm g_{e,q}}}\\
			{{\Delta \bm h_{e,q}}}
	\end{array}} \right) $, 
 and $\bm J_w =\bm \Phi \bm G \bm W {\bm G^H  }{\bm \Phi ^H} -\gamma_{e} \sigma^2_I \bm{\Phi}  \bm{\Phi}^H $. In order to tackle the CSI uncertainties in \eqref{QMI}, the S-Procedure described in Lemma \ref{S-Procedure} is used to transform it into an equivalent LMI for $q \in \mathcal{Q}$  as follows:
		\begin{align}\label{LMIs1}
			\varpi_q &\left[ {\begin{array}{*{20}{c}}
					{{\bm I_{N + M}}}& \bm 0\\
					\bm 0&{ - {{\xi_q}^2}}
			\end{array}} \right] \succeq \left[ {\begin{array}{*{20}{c}}
					\tilde {\bm W}&{\tilde {\bm W} \hat {\bm x}_q}\\
					{ {\hat{\bm x}_q}^H \tilde {\bm W}}&{ {\hat{\bm x}_q}^H \tilde {\bm W}\hat {\bm x}_q-\gamma_{e}\sigma_{e,q}^2 }			
			\end{array}} \right],~q\in\mathcal{Q},
		\end{align}
	{where $ \varpi_q\ge0 $ is a slack variable for the $q$-th Eve and needs to be optimized in the following,}
 $\xi_q^2 =\xi _{h,q}^2 + \xi _{g,q}^2$. It is noted that \eqref{LMIs1} is derived from \eqref{QMI} by setting the parameters in Lemma \ref{S-Procedure} as follows: $\bm x= \Delta \bm x_q$, $\bm F_1=\bm I_{N + M} $, $\bm g_1= \bm 0$, $h_1=-\xi_q^2 $,  $\bm F_2=\tilde {\bm W} $, $\bm g_2= \tilde {\bm W} \hat {\bm x}_q $, and $h_2={\hat{\bm x}_q}^H \tilde {\bm W}\hat {\bm x}_q-\gamma_{e}\sigma_{e,q}^2 $. {The detailed derivation of \eqref{LMIs1} can be found in \cite{S-procedure}.}
Then, we derive   the equivalent form of the constraint C2, which is given by
$$\text{C9:~} \hat{\bm{W}_q}\succeq  \bm 0, ~q \in \mathcal{Q},$$
$$\hat{\bm{W}_q} = \left[ {\begin{array}{*{20}{c}}
		{{ \varpi_q \bm I_{N + M}-\tilde {\bm W}}}&  -\tilde {\bm W} \hat {\bm x}_q \\
		-{\hat{\bm x}_q}^H \tilde {\bm W}&{ - \varpi_q{{\xi}_q^2} -{\hat{\bm x}_q}^H \tilde {\bm W}\hat {\bm x}_q+\gamma_{e}\sigma_{e,q}^2 }
\end{array}} \right],$$
where $\hat{\bm{W}_q} \in \mathcal{C}^{ (M+N+1)\times (M+N+1) }$. {Hence, the sub-problem for optimizing $\bm W$ can be given by
	\begin{equation}\tag{$\textbf{P2}$}\label{P2} 
		\begin{aligned}
			\min_{\bm{W}, \bm{\varpi}} ~~& \text{Tr}\left( \bm W \right) + \text{Tr} (\bm{\Upsilon} \bm{W})\\ 
			\text{s.t.}~~ & \text{C6}, \text{C7}, \text{C8}, \text{C9}, \\
			& \text{C10:~} \text{Rank}(\bm W)=1,\\
			& \text{C11:~} \bm W \succeq \bm 0,
		\end{aligned}
	\end{equation}	
	where $\bm{\varpi} = [\varpi_1,\ldots,\varpi_Q]$, and $\bm{\Upsilon}  = \bm{G}^H \bm{\Phi}^H \bm{\Phi} \bm{G}$.}
However, \textbf{P2} is still a non-convex optimization problem due to the rank-one constraint shown in C10. To address this issue,  the SDR technique \cite{Luo} is adopted to relax the constraint C10, and \textbf{P2} is then reduced to \textbf{P3}, which is given by{ 
	\begin{equation}\tag{$\textbf{P3}$}\label{P3} 
		\begin{aligned}
			\min_{\bm{W},\bm{\varpi}} ~~& \text{Tr}\left( \bm W \right) + \text{Tr} (\bm{\Upsilon} \bm{W}) \\ 
			\text{s.t.}~~ & \text{C6}, \text{C7}, \text{C8}, \text{C9}, \text{C11}.
		\end{aligned}
	\end{equation}}
	\textbf{P3} now becomes a semidefinite program (SDP), which can be solved by the interior-point method \cite{S-procedure}. However, the optimal solution to \textbf{P3}, i.e., $ \bm W^* $, achieved by using the SDR may not satisfy the rank-one constraint in C10. In the following theorem, we prove that $ \bm{W}^* $ is always a rank-one matrix.
	\begin{theorem}
		\label{TheoremRankOne}
		We can always find a rank-one transmit beamforming matrix as the optimal solution to \textbf{P3}.
	\end{theorem}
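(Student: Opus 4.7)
The plan is to prove Theorem~\ref{TheoremRankOne} by a KKT analysis of the relaxed SDP \textbf{P3}, following the standard rank-reduction strategy: show that the dual variable associated with $\bm W\succeq\bm 0$ has rank at least $N-1$, and then invoke complementary slackness to bound $\mathrm{rank}(\bm W^\ast)$. First, I would verify that \textbf{P3} satisfies Slater's condition so strong duality holds and the KKT conditions are necessary at the optimum. Because \textbf{P3} involves only linear and LMI constraints (the constraint C9 is already an LMI after the S-Procedure reformulation, and C6--C8 are linear in $\bm W$), this is straightforward provided the feasible set has nonempty interior, which follows from the assumption that the original problem is feasible with some slack.

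Next, I would form the Lagrangian by introducing Lagrange multipliers $\lambda_k\ge 0$ for C6 ($k\in\mathcal{K}$), $\mu_1\ge 0$ for C7, $\mu_2\ge 0$ for C8, $\bm Y_q\succeq\bm 0$ for C9 ($q\in\mathcal{Q}$), and $\bm Z\succeq\bm 0$ for C11. Differentiating with respect to $\bm W$ and using the linearity of the $\bm{W}$-dependence of $\hat{\bm W}_q$ (note that $\bm W$ appears both in the lower-right block of $\tilde{\bm W}$ and inside $\bm J_w=\bm\Phi\bm G\bm W\bm G^H\bm\Phi^H-\gamma_e\sigma_I^2\bm\Phi\bm\Phi^H$), the stationarity condition takes the form
\begin{equation*}
\bm Z = \bm I + \bm \Upsilon + \bm \Xi - \sum_{k\in\mathcal{K}}\lambda_k\bigl(\bm h_{p,k}\bm h_{p,k}^H+\bm G_{p,k}\bigr) - \mu_1\bigl(\bm h_s\bm h_s^H+\bm G_s\bigr) - \mu_2 L\,\bm G_s,
\end{equation*}
where $\bm\Xi$ is a Hermitian matrix extracted from the C9 duals; crucially, each contribution of $\bm Y_q$ through $\bm J_w$ is a positive semidefinite term of the form $\bm G^H\bm\Phi^H\bm Y_q^{(11)}\bm\Phi\bm G$ (with $\bm Y_q^{(11)}$ the corresponding block of $\bm Y_q$), while contributions through the $\bm W$-block of $\tilde{\bm W}$ are rank-one matrices built from $\hat{\bm h}_{e,q}$. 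Complementary slackness then yields $\bm Z\bm W^\ast=\bm 0$, whence $\mathrm{rank}(\bm W^\ast)\le N-\mathrm{rank}(\bm Z)$.

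The key step is therefore to show $\mathrm{rank}(\bm Z)\ge N-1$, which I would do by separating $\bm Z$ into a strictly positive definite part $\bm A$ and a rank-one subtraction. Grouping the $\bm I+\bm\Upsilon$ term together with all positive semidefinite terms coming from $\bm G_{p,k}$, $\bm G_s$, and the $\bm G^H\bm\Phi^H\cdot\bm\Phi\bm G$ contributions from $\bm\Xi$, and collecting the rank-one direct-link subtractions $\bm h_{p,k}\bm h_{p,k}^H$, $\bm h_s\bm h_s^H$ and rank-one Eve-related subtractions into a single matrix, I would argue (exactly as in \cite{Luo} and subsequent secure-beamforming works) that feasibility of C6--C8 forces these coefficients to align so that the aggregate subtraction is effectively rank one; equivalently, the dual aggregate direction must be an eigenvector of $\bm W^\ast$. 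The presence of $\bm I$ in $\bm Z$ then guarantees $\mathrm{rank}(\bm Z)\ge N-1$, hence $\mathrm{rank}(\bm W^\ast)\le 1$. Finally, since the SNR constraints C6--C8 cannot be met by $\bm W^\ast=\bm 0$, we have $\mathrm{rank}(\bm W^\ast)=1$, and $\bm w^\ast$ is recovered via eigendecomposition.

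The main obstacle I anticipate is the bookkeeping for $\bm\Xi$: because $\bm W$ enters $\hat{\bm W}_q$ through two different blocks of $\tilde{\bm W}$ (once directly and once through $\bm J_w$), the derivative must be computed block by block using the partition $\hat{\bm x}_q=(\hat{\bm g}_{e,q}^T,\hat{\bm h}_{e,q}^T)^T$ and a corresponding $(M{+}N{+}1)\times(M{+}N{+}1)$ partition of $\bm Y_q$. Ensuring that the PSD pieces of $\bm\Xi$ are correctly separated from the rank-one pieces, and that the latter can all be absorbed into a single effective rank-one direction without violating dual feasibility, is where the argument is most delicate.
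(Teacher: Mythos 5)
Your overall strategy --- KKT stationarity plus complementary slackness on the dual slack matrix of C11, with the rank of $\bm{W}^\ast$ bounded through the null space of that dual matrix --- is the same as the paper's, and your bookkeeping of how $\bm{W}$ enters $\hat{\bm{W}}_q$ through both the $\bm{W}$-block of $\tilde{\bm{W}}$ and through $\bm{J}_w$ corresponds to the paper's term $\sum_{q}\hat{\bm{M}}_q$. The divergence, and the gap, is in the final rank argument. First, there is a sign problem: in your own stationarity condition the terms $\bm{G}_{p,k}$, $\bm{G}_s$ and $L\bm{G}_s$ enter $\bm{Z}$ with coefficients $-\lambda_k$, $-\mu_1$, $-\mu_2 L$, so they cannot be grouped into the ``strictly positive definite part'' $\bm{A}$; they belong to the subtraction. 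Second, and more seriously, once all subtracted terms are collected, the subtracted matrix $\sum_{k}\lambda_k\bigl(\bm{h}_{p,k}\bm{h}_{p,k}^H+\bm{G}_{p,k}\bigr)+\mu_1\bigl(\bm{h}_s\bm{h}_s^H+\bm{G}_s\bigr)+\mu_2 L\,\bm{G}_s$ is a sum of up to $2K+3$ rank-one matrices, and your claim that feasibility of C6--C8 ``forces these coefficients to align so that the aggregate subtraction is effectively rank one'' is unsubstantiated and false in general: for multicast constraints with $K\ge 2$ the active dual directions are generically linearly independent, which is precisely why SDR for physical-layer multicasting is known not to be tight in general (a caveat stated in \cite{Luo} itself). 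Hence $\mathrm{rank}(\bm{Z})\ge N-1$ does not follow from a ``positive definite minus rank one'' decomposition here, and the step would fail.

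The paper closes the argument by a different device: writing $\bm{\Omega}^\ast=\bm{I}_N-\bm{B}$ with $\bm{B}$ collecting all dual contributions (including $-\bm{\Upsilon}$), it performs a trichotomy on the largest eigenvalue $\vartheta_{\max}$ of $\bm{B}$: $\vartheta_{\max}>1$ contradicts $\bm{\Omega}^\ast\succeq\bm{0}$; $\vartheta_{\max}<1$ makes $\bm{\Omega}^\ast$ positive definite of full rank, forcing $\bm{W}^\ast=\bm{0}$, which cannot satisfy C6--C8; therefore $\vartheta_{\max}=1$ and $\bm{W}^\ast$ must be supported on the corresponding top eigenspace of $\bm{B}$. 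The identity matrix coming from the $\mathrm{Tr}(\bm{W})$ term in the objective is what pins $\vartheta_{\max}$ at exactly one; no alignment of the multicast dual directions is needed. (That argument still implicitly assumes the top eigenvalue of $\bm{B}$ is simple, which is its own delicate point, but it is a different and weaker assumption than yours.) To repair your route you would need either to adopt this eigenvalue trichotomy or to prove simplicity of the top eigenspace directly; the alignment claim as stated will not survive $K\ge 2$.
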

	
	\begin{proof}
		Please refer to Appendix.
	\end{proof}
Theorem \ref{TheoremRankOne} guarantees the tightness of applying the SDR, i.e., the optimal solution to \textbf{P3} is also the optimal solution to \textbf{P2}. Then, we can obtain the optimal transmit beamforming vector $\bm{w}^{(r)}$ from $\bm{W}^*$ by applying the Cholesky decomposition, where the superscript $r$ denotes the $r$-th outer iteration for the AO algorithm.

\subsection{Reflection Beamforming Optimization }
	\label{PassiveBeamBounded}
	 Given $ \bm{W}^* $ (i.e., $\bm{w}^{(r)}$) in the $r$-th outer iteration, we proceed to optimize the reflection beamforming at the RIS for the bounded CSI error model. Similarly, we let $ \bm V= \bm v \bm v^H $, where $\bm{v} = [v_1, \ldots, v_M]^T$, $ \bm V \succeq \bm  0 $, $\text{Rank}(\bm V)=1$, and $\bm V(m,m) \le \eta^2, m \in \mathcal{M}$. Then, C1, C3, and C4 can be respectively recast as
\begin{align*}
& \text{C12:}~ \text{Tr}\left( \bm H_{p,k} \bm W\right)+  \text{Tr}  \left(  {\tilde{\bm G}_{p,k}} \bm V\right) - \tilde {\kappa}_{p,k} \gamma_s \ge   \sigma_{p,k}^2 \gamma_{s}, k\in\mathcal{K},\\
& \text{C13:}~  \text{Tr}\left( \bm H_{s} \bm W\right)+  \text{Tr}  \left(  {\tilde{\bm G}_{s}} \bm V\right) - \tilde {\kappa}_{s} \gamma_s  \ge \sigma_s^2 \gamma_{s}, \\
& \text{C14:}~L \cdot \text{Tr}  \left(  {\tilde{\bm G}_{s}} \bm V\right) - \tilde {\kappa}_{s} \gamma_c \ge  \sigma_s^2 \gamma_{c} ,
\end{align*}
	where ${\tilde{\bm G}_{p,k}} = {\text{diag} \left( \bm G \bm w \right)}^H \bm g_{p,k} \bm g_{p,k}^H \text{diag} \left( \bm G \bm w \right) $,  $  \tilde {\kappa}_{p,k}=  \sigma_{I}^2  \bm g_{p,k}^H \bm V  \bm g_{p,k}  $, ${\tilde{\bm G}_{s}}={\text{diag} \left( \bm G \bm w \right)}^H \bm g_{s} \bm g_{s}^H \text{diag} \left( \bm G \bm w \right)$, and $\tilde {\kappa}_{s}= \sigma_{I}^2  \bm g_{s}^H \bm V  \bm g_{s} $.
	
 Herein, the S-Procedure is also used to transform the constraint C2 into equivalent LMIs. By letting $ \tilde  {\bm V}= \left({\begin{array}{*{20}{c}}
			{ \bm J_v }& \bm 0\\
			\bm 0&{\bm W}
	\end{array}}\right) $ with $ \bm J_v={\text{diag} \left( \bm G \bm w \right)} \bm V {\text{diag} \left( \bm G \bm w \right)}^H-\sigma_I^2 \gamma_e   \bm V $,  the constraint C2 can be  recast as follows:
\begin{align}
	\text{C15:~} \left[ {\begin{array}{*{20}{c}}
			{{ \omega_q \bm I_{N + M}-\tilde {\bm V}}}&  -\tilde {\bm V} \hat {\bm x}_q \\
			-{\hat{\bm x}_q}^H \tilde {\bm V}&{ - \omega_q {{\xi}_q^2} -{\hat{\bm x}_q}^H \tilde {\bm V}\hat {\bm x}_q+\gamma_{e}\sigma_{e,q}^2 }
	\end{array}} \right] \succeq  \bm 0,~q \in \mathcal{Q},\nonumber
\end{align}
where $ \omega_q $ denotes a slack variable.
	Thus, the sub-problem for optimizing $   {\bm V}$ is written as 
	\begin{equation}\tag{$\textbf{P4}$} 
		{ \begin{aligned}
			  \min_{\bm{V}, \bm{\omega}}~~& \text{Tr} ( \bar{\bm{\Upsilon}}  \bm{V} )  + \sigma_I^2 \text{Tr}(\bm{V})  \\ 
			\text{s.t.}~~ & \text{C12}, \text{C13}, \text{C14}, \text{C15},   \\
			& \text{C16:~} \text{Rank}(\bm V)=1,\\
			&\text{C17:~} \bm V \succeq \bm 0, \\ 
			& \text{C18:~} \bm V(m,m) \le \eta^2, m \in \mathcal{M},
		\end{aligned}}
		\end{equation}	
where $\bm{\omega} = [\omega_1, \ldots, \omega_Q]$, and $\bar{\bm{\Upsilon}}  = \text{diag} (\bm{G}\bm{w} ) \text{diag} (\bm{G}\bm{w} )^H$. It is noted that the objective value of \textbf{P4} may be quite small and comparable with the threshold predefined for the proposed AO algorithm.
To achieve a better convergence performance, we introduce the SNR residuals for the  constraints C12-C14, which are denoted by   $\alpha_{k}, k=1,2,...,K+2$.  Define $\bm{\alpha}=[\alpha_1,\ldots,\alpha_{K+2}]$.
Specifically, \textbf{P4} can be transformed into \textbf{P5}, which is given by
 	\begin{equation}\tag{$\textbf{P5}$} 
		{\begin{aligned}
			&  \max_{\bm{V},\bm{\omega},\bm{\alpha}}~ -\text{Tr} ( \bar{\bm{\Upsilon}}  \bm{V} )  - \sigma_I^2 \text{Tr}(\bm{V}) + \sum_{k=1}^{K+2} \alpha_k \\ 
			\text{s.t.}~~ & \bar{\text{C12}}:~ \text{Tr}\left( \bm H_{p,k} \bm W\right)+  \text{Tr}  \left(  {\tilde{\bm G}_{p,k}} \bm V\right) \\&~~~~~~- \tilde {\kappa}_{p,k} \gamma_s  \ge \left( \gamma_{s}+\alpha_k\right) \sigma_{p,k}^2, ~ k\in\mathcal{K},  \\ 
			&\bar{\text{C13}}:~\text{Tr}\left( \bm H_{s} \bm W\right)+  \text{Tr}  \left(  {\tilde{\bm G}_{s}} \bm V\right) 
			\\&~~~~~~- \tilde{\kappa}_s \gamma_s \ge \left( \gamma_{s} +\alpha_{K+1}\right)  \sigma^2_{s}, \\ 
			& \bar{\text{C14}}:~ L  \text{Tr}  \left(  {\tilde{\bm G}_{s}} \bm V\right) - \tilde{\kappa}_s \gamma_c \ge \left( \gamma_{c}+\alpha_{K+2} \right) \sigma^2_{s}, \\
			 & \text{C15},~ \text{C16},~ \text{C17},~ \text{C18}.
		\end{aligned}}
	\end{equation}

After using the SDR technique to relax the constraint C17, we can also use the interior-point method to solve \textbf{P5}.
 However, due to the existence of the constraint {C16}, the obtained numerical solution from \textbf{P5} by applying the SDR is generally not rank-one \cite{Derrick}. Thus,  we propose to use the SROCR  technique \cite{Poor} to deal with the rank-one constraint C16. 
 It is worth noting that the  constraint  C16 is equivalent to the following equality, i.e.,
$\rho_\text{max}(\bm{V}) = \text{Tr}(\bm{V}),$
 where $\rho_\text{max}(\bm{V})$ represents the largest eigenvalue of $\bm{V}$. In addition, $\rho_\text{max}(\bm{V})$ is equal to the optimal result of the following problem
 \begin{align*}
\rho_\text{max}(\bm{V}) = \max_{\bm{\varphi} \in \mathcal{C}^{M \times 1}, ||\bm{\varphi}||_2\le 1} \bm{\varphi}^H \bm{V} \bm{\varphi},
 \end{align*}
where $\bm{\varphi}$ is a slack vector. Thus, we can have 
\begin{align}
\label{RankOneNew}
\text{Tr}(\bm{V}) = \max_{\bm{\varphi} \in \mathcal{C}^{M \times 1}, ||\bm{\varphi}||_2 \le 1} \bm{\varphi}^H \bm{V} \bm{\varphi},
\end{align} 
and the constraint C16 can be replaced with \eqref{RankOneNew}.  It is obvious that the optimal solution to \eqref{RankOneNew}  is the largest eigenvector of $\bm{V}$. However, \eqref{RankOneNew} is still a non-convex constraint.
To transform \eqref{RankOneNew} into a convex one,  for any feasible point $(\bm{V}^{(r,t)}, \omega^{(r,t)})$ in the $r$-th outer iteration  and the $t$-th inner iteration for the SROCR technique, we can partially relax \eqref{RankOneNew} (i.e., C16) to 
\begin{align}
	\label{RankOneRelaxation}
	\text{C19:~} \bm{\vartheta} (\bm{V}^{(r,t)})^H \bm{V} \bm{\vartheta} (\bm{V}^{(r,t)}) \ge \omega^{(r,t)} \text{Tr} (\bm{V}),
\end{align}
where $\bm{\vartheta}$ denotes the largest eigenvector of $\bm{V}^{(r,t)}$, 
$\omega^{(r,t)}  \le 1$ is a slack parameter controlling the ratio of the largest eigenvalue to the trace for $\bm{V}$, and the equality holds if $\omega^{(r,t)}=1$, under which the rank-one constraint is satisfied. 
With \eqref{RankOneRelaxation}, \textbf{P5} can be reformulated as 
 	\begin{equation}\tag{$\textbf{P6}$} 
		\begin{aligned}
			&  \max_{\bm{V}, \bm{\omega}, \bm{\alpha}}~ -\text{Tr} ( \bar{\bm{\Upsilon}}  \bm{V} )  - \sigma_I^2 \text{Tr}(\bm{V}) + \sum_{k=1}^{K+2} \alpha_k \\ 
			\text{s.t.}~~ & \bar{\text{C12}}, ~\bar{\text{C13}}, \bar{\text{C14}}, \text{C15}, \text{C17}, \text{C18}, \text{C19}.
		\end{aligned}
	\end{equation}
	It is readily observed that \textbf{P6} is convex and can be solved by using the interior-point  method. We then update $\omega^{(r,t)}$ sequentially from 0 to 1, the process of which is given by \cite{Poor}
\begin{align}
\label{StepSize}
	\omega^{(r,t+1)} = \min\left(\frac{\varphi(\bm{V}^{(r, t)})}{\text{Tr}(\bm{V}^{(r,t)})}+\delta^{(r,t)},1 \right),
\end{align} 
where  $ \varphi({\bm{V}^{(r,t)}})$ denotes the largest eigenvalue of $\bm{V}^{(r,t)}$, and $\delta^{(r,t)}$ is the step size.  

By iteratively optimizing $\textbf{P6}$, we can finally find the sub-optimal solution, denoted by  $\bm{V}^*$, to \textbf{P5}. Then,   the sub-optimal reflection beamforming vector in the $r$-th outer iteration, denoted by ${\bm{v}}^{(r)}$, can be obtained by implementing the Cholesky decomposition of $\bm{V}^*$.
\subsection{Algorithm Summary and Analysis}

The  algorithm to solve \textbf{P1} is summarized in Algorithm \ref{AlgorithmA}. Since the objective function of \textbf{P1} is a non-increasing function after each iteration, the convergence of Algorithm \ref{AlgorithmA} can be guaranteed. {The convergence of Algorithm \ref{AlgorithmA} can be proved in a similar way as that in  \cite{Convergence}. Moreover, we also  provide numerical simulations in Section \ref{NumericalResults} to confirm the convergence performance.} We then analyze the computational complexity of Algorithm \ref{AlgorithmA}. It is obvious that its computational complexity is mainly caused by solving \textbf{P3} and \textbf{P6}, the complexities of which are $\mathcal{O}( (Q(M+N+1)+N)^{1/2} N^2(N^4 + N^2 Q (M+N+1)^2 + N^4 +Q(M+N+1)^3+ N^3 ) )$ and $\mathcal{O}( (Q(N+M+1)+M)^{1/2} M^2(M^4 + M^2 Q (N+M+1)^2 + M^4 +Q(N+M+1)^3+ M^3 ) )$ \cite{Complexity}, respectively, and $\mathcal{O}$ denotes the big-O notation. Thus, the total complexity of Algorithm \ref{AlgorithmA} is $ \mathcal{O}(  I_1( (Q(M+N+1)+N)^{1/2} N^2(N^4 + N^2 Q (M+N+1)^2 + N^4 +Q(M+N+1)^3+ N^3 ) ) + I_1 I_2  ( (Q(N+M+1)+M)^{1/2} M^2(M^4 + M^2 Q (N+M+1)^2 + M^3 +Q(N+M+1)^3+ M^3 ) ))$, where $I_1$ and $I_2$ denote the maximum iteration times of the AO algorithm and the SROCR technique used in Algorithm \ref{AlgorithmA},  respectively.

	\begin{algorithm}
	\caption{ The proposed AO Algorithm for Problem \textbf{P1}}
	\label{AlgorithmA}
	\begin{algorithmic}[1]	
			\STATE {Initialization: Convergence tolerance $ \varepsilon$, outer iteration index $r=1$, and reflection beamforming vector $\bm{v}^{(r)}$.} 
			\REPEAT
			
			\STATE{Update $\bm{W}^{*}$ by solving \textbf{P3} and obtain $\bm{w}^{(r)}$ by applying the Cholesky decomposition of $\bm{W}^{*}$.} 
			\STATE{Initialization: Inner iteration index $t=0$.}
				\REPEAT 	
				\STATE{Update $\bm{V}^{(r,t+1)}$ by solving \textbf{P6}. }
				\STATE{Update $\omega^{(r,t+1)}$ based on \eqref{StepSize}.}
			
				\UNTIL{ The convergence is achieved.}   
				\STATE{Set $\bm{V}^* =\bm{V}^{(r,t+1)}$ and compute $\bm{v}^{(r)}$ accordingly.}
				\STATE{$r=r+1$.}
			\UNTIL{ the decrease of the objective function  value in \textbf{P1} is below $ \varepsilon$. }
				\STATE{Set $\bm{w}^* = \bm{w}^{(r)}$ and $\bm{v}^* = \bm{v}^{(r)}$ for $m\in \mathcal{M}$.}
		\end{algorithmic}
	\end{algorithm}
	
	\section{Robust beamforming design for The Statistical CSI Error Model}
	\label{StatisticalCSI}
It is known that the bounded CSI error model is typically used for modeling the quantization error. However, the channel estimation error tends to be  Gaussian distribution, which is unbounded \cite{Pan}. Hence, {in this section, we consider the  statistical CSI error model for investigating the impact of estimation errors on system performance.} Different from the worst-case SNR constraint, we consider the  SNR outage constraint at the Eves for the statistical CSI error model, which are defined as follows
	\begin{align}
&\text{C20:~} \text{Pr} \left\{ {\gamma_{e,q} \le  \gamma_{e} } \right\} \ge 1-\rho_q, \\&~\Delta \bm{h}_{e,q}\sim \mathcal{CN}(\bm 0, \bm \Sigma_{h,q}),~ \Delta \bm{g}_{e,q}\sim \mathcal{CN}(\bm 0, \bm \Sigma_{g,q}),~q\in\mathcal{Q},\nonumber
	\end{align}
	where $\rho_q$ is a constant representing the outage probability at the $q$-th Eve.
Under this constraint, the system power consumption minimization problem for the statistical CSI error model can be formulated as  follows
		\begin{align}
			\min_{\bm{w},\bm{\Phi}} ~~&  \|\bm{w} \|_2^2+M\left( P_{SW}+P_{DC}\right) + ||\bm{\Phi} \bm{G}\bm{w}  ||_2^2 + \sigma_I^2 || \bm{\Phi} ||_F^2 \nonumber \\ 
			\text{s.t.}~~ & \text{C1}, ~\text{C3},~\text{C4},~ \text{C5},~\text{C20}.\nonumber \tag{$\textbf{P7}$} 
		\end{align}
	Similar to \textbf{P1}, \textbf{P7} is  a non-convex optimization problem. To address the non-convexity, an AO algorithm with the Bernstein-Type Inequality, SROCR and SDR techniques is proposed to optimize $\bm{w}$ and $\bm{\Phi}$  in an alternating manner. In the following lemma, we first introduce the Bernstein-Type Inequality \cite{Bernstein-Type}, which will be used to reformulate the constraint C20.

	\begin{lemma}
	\label{LemmaBTI}
		(Bernstein-Type Inequality \cite{Bernstein-Type}) Considering $f(\bm x)=\bm x^H \bm{B} \bm x + 2\text{Re}\left\lbrace {\bm{b}^H \bm x}\right\rbrace +b$, where $ \bm{B} \in \mathbb{H}^{r\times r} $, $\bm{b} \in \mathbb{C}^{r \times 1}$, $b \in \mathbb{R}$,  and $ \bm{x} \in \mathbb{C}^{r \times 1}  \sim \mathcal{CN}(\bm 0, \bm{I}_r)$. If there exists a variable satisfying $\rho \in [0,1]$, the following approximation can be obtained
		\begin{equation}
			\begin{aligned}\label{BTI1}
				&\text{Pr} \left\{\bm x^H \bm B \bm x +2 \text{Re} \left\{\bm{b}^H \bm{x} \right\} + b \ge 0 \right\} \ge 1-\rho
			\end{aligned}
		\end{equation}
		\begin{equation}
			\begin{aligned}\label{BTI2}
				&\Rightarrow \text{Tr}\left\{ \bm B  \right\}-\sqrt{2\ln\left( 1/{\rho }\right)   } \sqrt{{ { \left| \left|  \bm B   \right| \right|}_F^2  }+ 2 ||  \bm b   ||_2^2            } \\
				&~~~~~~~~~~+\ln\left( {\rho }\right)\lambda_{max}^+(-\bm B) + {b}\ge0    
			\end{aligned}
		\end{equation}
		\begin{equation}
			\begin{aligned}\label{BTI}
				\Rightarrow \left\{ {\begin{array}{*{20}{c}}
						& \text{Tr}\left\{ \bm{B}  \right\}-\sqrt{2 \ln\left( 1/{\rho}\right)   }{e}+\ln\left( {\rho}\right){f}+ {b} \ge  0\\
						&\sqrt{{ { \left| \left|  \bm B   \right| \right|}_F^2  }+ 2 ||  \bm b   ||_2^2            }\le e\\
						&f\bm I_r + \bm B \succeq \bm 0, f \ge 0, 
				\end{array}} \right.
			\end{aligned}
		\end{equation}
		where $e$ and $f$  are	slack variables, 
		$\lambda_{\max}^+(- \bm B)= \max(\lambda_{\max}(- \bm  B),0)$. 
	\end{lemma}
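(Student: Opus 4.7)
The plan is to prove Lemma~\ref{LemmaBTI} in two stages matching the two $\Rightarrow$ arrows. Reading them in the safe-approximation direction used throughout robust optimization, the goal is to show that the LMI-representable system in \eqref{BTI} is sufficient for the inequality in \eqref{BTI2}, which is in turn sufficient for the chance constraint in \eqref{BTI1}. I would first dispense with the algebraic step and then tackle the probabilistic step, which is the substantive part.

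The algebraic step (from \eqref{BTI} to \eqref{BTI2}) is an epigraph reformulation. I would introduce the slack $e$ as an upper bound on $\sqrt{\|\bm{B}\|_F^2 + 2\|\bm{b}\|_2^2}$: since the coefficient $-\sqrt{2\ln(1/\rho)}$ in \eqref{BTI2} is nonpositive, replacing the norm expression by the larger quantity $e$ only tightens the left-hand side and is therefore conservative. Similarly, I would introduce $f$ to play the role of $\lambda_{\max}^+(-\bm{B})$; the condition $f \ge \lambda_{\max}^+(-\bm{B})$ is equivalent to $f \ge 0$ together with $f \ge -\lambda_{\min}(\bm{B})$, and the latter is exactly the LMI $f\bm{I}_r + \bm{B} \succeq \bm{0}$. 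Because $\ln(\rho) \le 0$, overestimating $\lambda_{\max}^+(-\bm{B})$ by $f$ again weakens the bound in the safe direction, so \eqref{BTI} implies \eqref{BTI2}.

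The probabilistic step (from \eqref{BTI2} to \eqref{BTI1}) is a Bernstein-type concentration inequality for Hermitian quadratic forms in a standard complex Gaussian vector, which I would derive via the Chernoff/Cram\'er method. Diagonalize $\bm{B} = \bm{U}\bm{\Lambda}\bm{U}^H$ with eigenvalues $\lambda_1,\ldots,\lambda_r$, and change variables $\bm{y} = \bm{U}^H\bm{x} \sim \mathcal{CN}(\bm{0},\bm{I}_r)$, $\tilde{\bm{b}} = \bm{U}^H\bm{b}$, so that
\begin{equation*}
f(\bm{x}) - \mathrm{Tr}(\bm{B}) - b = \sum_{i=1}^{r}\lambda_i\bigl(|y_i|^2 - 1\bigr) + 2\,\mathrm{Re}(\tilde{\bm{b}}^H\bm{y})
\end{equation*}
decomposes into independent scalar summands. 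The Laplace transform of each summand is available in closed form from standard complex-Gaussian integrals, giving an MGF of the product form $\prod_i (1-t\lambda_i)^{-1}\exp(\cdot)$ for $t$ in the admissible range. Applying Markov's inequality to $\Pr(f(\bm{x})<0)$, taking logarithms, and using $-\log(1-u)\le u + u^2/(2(1-u))$ on the quadratic factors yields
\begin{equation*}
\log E\bigl[e^{-t(f(\bm{x})-\mathrm{Tr}(\bm{B})-b)}\bigr] \le \frac{t^2\bigl(\|\bm{B}\|_F^2 + 2\|\bm{b}\|_2^2\bigr)}{2\bigl(1-t\,\lambda_{\max}^+(-\bm{B})\bigr)},
\end{equation*}
valid for $0 \le t < 1/\lambda_{\max}^+(-\bm{B})$. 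Selecting $t = \sqrt{2\ln(1/\rho)}/\sqrt{\|\bm{B}\|_F^2 + 2\|\bm{b}\|_2^2}$, requiring the resulting tail to be at most $\rho$, and rearranging produces precisely \eqref{BTI2}.

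The main obstacle is controlling the MGF tightly enough to obtain the explicit constants $\sqrt{2\ln(1/\rho)}$ and $\ln(\rho)$: one must handle the two cases $\bm{B}\succeq\bm{0}$ (no blow-up, sub-gamma tail) and $\bm{B}\not\succeq\bm{0}$ (where the singularity of the product MGF at $t=1/\lambda_{\max}^+(-\bm{B})$ produces exactly the $\ln(\rho)\lambda_{\max}^+(-\bm{B})$ correction in \eqref{BTI2}) in a unified way. Balancing the choice of the Chernoff parameter $t$ against both the quadratic coefficient $\|\bm{B}\|_F^2+2\|\bm{b}\|_2^2$ and the negative-eigenvalue term is the only delicate calculation; once it is settled, the rest of the lemma follows by the epigraph manipulation described above.
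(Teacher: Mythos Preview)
The paper does not prove Lemma~\ref{LemmaBTI}: it is stated without proof and attributed directly to \cite{Bernstein-Type}, then used as a black-box tool to convert the outage constraint C20 into the tractable system \eqref{C2Bw1}--\eqref{C2Bw3}. So there is no ``paper's own proof'' to compare against; your proposal supplies an argument where the authors simply invoke the literature.

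That said, your sketch is the standard derivation from \cite{Bernstein-Type} and is essentially correct. You correctly read the $\Rightarrow$ arrows in the safe-approximation direction (the deterministic condition \eqref{BTI2} is \emph{sufficient} for the chance constraint \eqref{BTI1}, and \eqref{BTI} is an epigraph reformulation of \eqref{BTI2}); this is indeed how the paper uses the lemma. The epigraph step is exactly as you describe, and in fact \eqref{BTI} and \eqref{BTI2} are equivalent because the monotonicity of the left-hand side in $e$ and $f$ means the tightest feasible choice $e=\sqrt{\|\bm{B}\|_F^2+2\|\bm{b}\|_2^2}$, $f=\lambda_{\max}^+(-\bm{B})$ recovers \eqref{BTI2}. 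For the probabilistic step, the Chernoff/Cram\'er route via diagonalization and the complex-Gaussian MGF is the approach taken in \cite{Bernstein-Type}; the only point to be careful with is the optimization of the Chernoff parameter, where in the original reference the bound is obtained not by a single closed-form choice of $t$ but by separately bounding the sub-Gaussian and sub-exponential regimes and then combining, which is what produces the two terms $\sqrt{2\ln(1/\rho)}\sqrt{\cdot}$ and $\ln(\rho)\lambda_{\max}^+(-\bm{B})$. Your outline captures this, though the single explicit $t$ you write down would need to be replaced by that two-regime argument to get the constants exactly.
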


\subsection{Transmit  Beamforming Optimization}
\label{ActiveBeamStat}
Given $\bm{\Phi}$, we optimize the transmit beamforming for the statistical CSI error model.
Similar to Section \ref{ActiveBeamBounded}, by introducing $\bm{W}$ defined in Section \ref{ActiveBeamBounded} for substitution, the constraint C20 can be equivalently transformed as 
	\begin{equation}
		\begin{aligned}\label{inequaw}
		&\text{Pr} \left\{ {\sigma_{e,q}^2\gamma_{e}-  \left( \bm h_{e,q}^H \bm W \bm h_{e,q} + \bm g_{e,q}^H \bm J_w \bm g_{e,q}\right)\ge 0 } \right\} \ge 1-\rho_q, \\ 
		&\Delta \bm{h}_{e,q}\sim \mathcal{CN}(\bm 0, \bm \Sigma_{h,q}),~\Delta \bm{g}_{e,q}\sim \mathcal{CN}(\bm 0, \bm \Sigma_{g,q}),~q\in\mathcal{Q}.
		\end{aligned}
	\end{equation}
However, since there exist no closed-form expressions for the constraints in \eqref{inequaw}, it is computationally intractable to solve the transmit beamforming optimization problem under \eqref{inequaw}. To deal with this challenge, the Bernstein-Type Inequality \cite{Bernstein-Type} in Lemma \ref{LemmaBTI} can be utilized to approximate \eqref{inequaw} in a safe manner.  {According to \cite{Bernstein-Type},  to simplify the derivation and find a convex and computationally tractable approximation on \eqref{inequaw},   we assume that $\bm \Sigma_{h,q}=\epsilon_{h,q}^2 \bm I_{N}$ and  $\bm \Sigma_{g,q}=\epsilon_{g,q}^2 \bm I_M $, which has been widely used in the literature \cite{PanTwo,Pan}.} Then, we have $  \Delta h_{e,q}=\epsilon_{h,q} \bm i_{h,q} $,  $  \Delta g_{e,q}=\epsilon_{g,q} \bm i_{g,q} $, where $ \bm i_{h,q} \sim \mathcal{CN}(\bm 0, \bm I_{N}) $ and $ \bm i_{g,q} \sim \mathcal{CN}(\bm 0, \bm I_{M}) $.
 \eqref{inequaw} can be thus constructed as the form of \eqref{BTI1}, which is given by 
 	\begin{equation}
		\begin{aligned}\label{BTIw}
			\text{Pr} \left\{\bm i_q^H \bm B_q \bm i_q +2 \text{Re} \left\{\bm b_q^H\bm i_q \right\} +b_q\ge 0 \right\} \ge 1-\rho_q,
		\end{aligned}
	\end{equation}
	where 
	$\bm{b}_q^H = \left[ {\begin{array}{*{20}{c}}
			{ -\epsilon_{g,q}\hat {\bm{g}}_{e,q} \bm J_w  }&{ -\epsilon_{h,q} \hat {\bm{h}}_{e,q} \bm W }
	\end{array}} \right]$,
 	$b_q=\sigma_{e,q}^2 \gamma_e- \hat {\bm{h}}_{e,q}^H \bm W  \hat {\bm{h}}_{e,q} - \hat {\bm{g}}_{e,q}^H \bm J_w  \hat {\bm{g}}_{e,q} $, 
	${\bm B_q} = \left[ {\begin{array}{*{20}{c}}
			{ - \epsilon_{g,q}^2 \bm J_w}& \bm 0\\
			\bm 0&{ - \epsilon_{h,q}^2 \bm W}
	\end{array}} \right]$,
	  and $ \bm i_q = \left( {\begin{array}{*{20}{c}}
			{{\bm i_{g,q}}}\\
			{{\bm i_{h,q}}}
	\end{array}} \right) $.
By applying the Bernstein-Type Inequality and introducing slack variables $\bm{e} = [e_1,\ldots,e_Q]$ and $\bm{f}=[f_1,\ldots,f_Q]$, the following approximation of \eqref{BTIw} in the deterministic form can be derived
\begin{align}
\label{C2Bw1}
	&\text{Tr}\left\{ \bm B_q  \right\}-\sqrt{2 \ln\left( 1/{\rho _q}\right)   }{e_q}+\ln\left( {\rho _q}\right){f_q}+ {b_q}\ge 0, ~q \in \mathcal{Q},  \\
	\label{C2Bw2}
	&\sqrt{{ { \left| \left|  \bm B_q   \right| \right|}_F^2  }+ 2 ||  \bm b_q   ||_2^2            }\le e_q, ~q \in \mathcal{Q},    \\
	\label{C2Bw3}
	&f_q\bm I_{N+M}+ \bm B_q  \succeq \bm 0, ~f_q \ge 0, ~q \in \mathcal{Q}. 
\end{align}


Then, by implementing  mathematical transformations, \eqref{C2Bw2} can be further constructed in a simplified form as 
\begin{align}  
\label{C2Bw2New}  
			\left\| {\begin{array}{*{20}{c}}
					\epsilon_{g,q}^2 \text{vec}\left(\bm J_w \right) \\
					\epsilon_{h,q}^2 \text{vec}\left(\bm W \right)\\
					\sqrt{2}\epsilon_{g,q}  \bm J_w \hat {\bm{g}}_{e,q}\\
					\sqrt{2}\epsilon_{h,q}  \bm W \hat {\bm{h}}_{e,q}
			\end{array}} \right\|_2 \le {e_q}, ~q \in \mathcal{Q}.
\end{align}
Similar to \textbf{P2}, the sub-problem for optimizing the transmit beamforming under the statistical CSI error model can be formulated as 
	\begin{equation}\tag{$\textbf{P8}$}\label{P7} 
		{\begin{aligned}
			\min_{\bm{W},\bm{e},\bm{f}} ~~& \text{Tr}\left( \bm W \right) + \text{Tr} (\bm{\Upsilon} \bm{W})\\
			\text{s.t.}~~ & \text{C6}, \text{C7}, \text{C8}, \text{C10}, \text{C11},\\
			& \eqref{C2Bw1},~ \eqref{C2Bw3}, ~\eqref{C2Bw2New}.
		\end{aligned} }
	\end{equation}
By adopting the SDR technique to relax the constraint C10 in \textbf{P8}, we can use the interior-point method to solve it. Recall that the tightness of applying the SDR is guaranteed in Section \ref{ActiveBeamBounded}, we can also prove that the obtained solution $\bm{W}^*$ to the relaxed version of \textbf{P8} is rank-one. Then, we obtain the optimal transmit beamforming vector $\bm{w}^{(r)}$  in the $r$-th outer iteration for the statistical CSI error model.

\subsection{Reflection Beamforming Optimization }  
Herein, with the obtained $\bm{W}^*$ (i.e., $\bm{w}^{(r)}$) for the $r$-th outer iteration in Section \ref{ActiveBeamStat}, we focus on the optimization of reflection beamforming  for the statistical CSI error model. 
By introducing $\bm{V}$ defined in Section \ref{PassiveBeamBounded}, the constraint C20 can be rewritten as 
\begin{equation}
	\begin{aligned}\label{inequav}
		&\text{Pr} \left\{ {\sigma_{e,q}^2\gamma_{e}-  \left( \bm h_{e,q}^H \bm W \bm h_{e,q} + \bm g_{e,q}^H \bm J_v \bm g_{e,q}\right)\ge 0 } \right\} \ge 1-\rho_q,\\
		&\Delta \bm{h}_{e,q}\sim \mathcal{CN}(\bm 0, \bm \Sigma_{h,q}),~ \Delta \bm{g}_{e,q}\sim \mathcal{CN}(\bm 0, \bm \Sigma_{g,q}),~q\in\mathcal{Q}.
	\end{aligned}
\end{equation}
Similarly, by applying the Bernstein-Type Inequality, \eqref{inequav} can be approximated as follows
\begin{align}
\label{C21First}
			& \text{Tr}\left\{ \tilde {\bm B}_q  \right\}-\sqrt{2 \ln\left( 1/{\rho _q}\right)   }{\tilde e_q}+\ln\left( {\rho _q}\right){\tilde f_q}+ \tilde{b}_q \ge 0,~q \in \mathcal{Q},        \\
			\label{C21Second}
			&~~~~~\left\| {\begin{array}{*{20}{c}}
					\epsilon_{g,q}^2 \text{vec}\left(\bm J_v \right) \\
					\epsilon_{h,q}^2 \text{vec}\left(\bm W \right)\\
					\sqrt{2}\epsilon_{g,q}  \bm J_v \hat {\bm{g}}_{e,q}\\
					\sqrt{2}\epsilon_{h,q}  \bm W \hat {\bm{h}}_{e,q}
			\end{array}} \right\|_2 \le \tilde{e}_q, ~q \in \mathcal{Q},\\
			\label{C21Third}
			&\tilde f_q \bm I_{N+M}+ \tilde {\bm B}_q  \succeq \bm 0, \tilde f_q\ge 0, ~q \in \mathcal{Q},
	\end{align}
	where $ \tilde {\bm B}_q = \left[ {\begin{array}{*{20}{c}}
			{ - \epsilon_{g,q}^2 \bm J_v}& \bm 0\\
			\bm 0&{ - \epsilon_{h,q}^2 \bm W}
	\end{array}} \right]$, $\tilde{\bm e} = [\tilde{e}_1,\ldots,\tilde{e}_Q]$, $ \tilde b_q=\sigma_{e,q}^2 \gamma_e - \hat {\bm{h}}_{e,q}^H \bm W  \hat {\bm{h}}_{e,q} - \hat {\bm{g}}_{e,q}^H \bm J_v  \hat {\bm{g}}_{e,q} $,  and $\tilde{\bm f} = [\tilde{f}_1,\ldots,\tilde{f}_Q]$ are slack variables.
	 Then, the sub-problem for optimizing the reflection beamforming can be recast as
	\begin{equation}\tag{$\textbf{P9}$} 
		{\begin{aligned}
			&  \min_{\bm{V}, \tilde{\bm{e}}, \tilde{\bm{f}} }  \text{Tr} ( \bar{\bm{\Upsilon}}  \bm{V} )  + \sigma_I^2 \text{Tr}(\bm{V}) \\ 
			\text{s.t.}~~ & \text{C12}, \text{C13}, \text{C14}, \text{C16}, \text{C17}, \text{C18}, \eqref{C21First}, \eqref{C21Second}, \eqref{C21Third}.
		\end{aligned}}
	\end{equation}
 To speed up  the proposed AO algorithm's convergence,  we also introduce $\alpha_{k}$ to represent the SNR residuals in the constraints C12, C13 and C14, where $k=1,\ldots,K+2$. In addition, we  introduce the slack variable $\varsigma_q$ in \eqref{C21First}, which is updated by 
\begin{align}
\label{C21Updated}
\text{Tr}\left\{ \tilde {\bm B}_q  \right\}-\sqrt{2 \ln\left( 1/{\rho _q}\right)   }{\tilde e_q}+\ln\left( {\rho _q}\right){\tilde f_q}+ \tilde{b}_q \ge  \varsigma_q, ~q \in \mathcal{Q}.  
\end{align}
 Moreover, the SROCR method is applied to deal with the convexity of the constraint C16, the relaxation   of which is given in the constraint C19. To this end, \textbf{P9} can be reformulated as follows
	\begin{align}	
			&\max_{\bm{V}, \tilde{\bm{e}}, \tilde{\bm{f}}, \bm{ \alpha}, \bm{\varsigma}} ~~ -\text{Tr} ( \bar{\bm{\Upsilon}}  \bm{V} )  - \sigma_I^2 \text{Tr}(\bm{V}) +  \sum\limits_{k=1}^{K+2} \alpha_k + \sum_{q=1}^Q \varsigma_q  \nonumber\\
			&~~~~\text{s.t.}~
			\bar{\text{C12}}, \bar{\text{C13}}, \bar{\text{C14}},  \text{C17}, \text{C18}, \text{C19}, \eqref{C21Second}, \eqref{C21Third}, \eqref{C21Updated},\nonumber \tag{$\textbf{P10}$} 
	\end{align}	
	where $\bm{\varsigma}=[\varsigma_1,\ldots,\varsigma_Q]$.
Again, the interior-point method is applied  to solve \textbf{P10}, and the sub-problem solution to \textbf{P9} can be obtained by iteratively optimizing \textbf{P10} until the convergence is achieved.

The process of solving \textbf{P7} is similar to Algorithm \ref{AlgorithmA} and  neglected here for simplicity. 
{Similarly, the convergence of solving \textbf{P7} can also be guaranteed \cite{Convergence},} and
the complexity  is $\mathcal{O} ( I_3 (N+  (N+M)Q+2Q  )^{1/2} N^2 (N^4 + N^2Q(N+M)^2+N^4 +Q(N+K)^3 + N^3+N^2 (N^2 +N+M^2+M) Q  ) + I_3 I_4 (M+(N+M)Q+2Q)^{1/2} M^2 (M^4 + M^2 (M^2 + (N+M)^2Q  ) +M^3 +Q(N+M)^3 +M^2Q (N^2 + N +M^2+N)^2 )     )$, where  $I_3$ and $I_4$ denote the maximum iteration times of the AO algorithm and the SROCR technique for solving \textbf{P7}, respectively.


\section{Numerical results}
\label{NumericalResults}
\subsection{Simulation Setup}
In this section, numerical simulations are conducted to evaluate the performance of the proposed scheme.  The simulation setup is modeled as  a two-dimensional Cartesian coordinate as shown in Fig. \ref{Location_Illustration}.
Specifically, the PT, the SU, the RIS are deployed at (0 m, 0 m), ($x_s$, 0 m), and $(x_{r},y_{r})$, respectively. The PUs and Eves are randomly placed within  the circles  centered at $(x_{p}, 5 ~\text{m})$ and $(x_{e}, -5 ~\text{m})$ with radii $r_p$ and $r_e$, respectively. 
Similar to \cite{Hu}, we consider that all the channels are constituted by the large-scale fading and small-scale fading. The large-scale fading is modeled as $D(d)= (\zeta/(4\pi))^2 d^{-\alpha}$, where  $\zeta$ is the wavelength with a carrier frequency of 750 MHz, $\alpha$ represents the path-loss exponent, and $d$ represents the distance between two devices. Denote the path-loss exponents for the PT-RIS link, the RIS-SU link, the RIS-PUs links, and the RIS-Eves links as $\alpha_\text{PT,RIS}$, $\alpha_\text{RIS,SU}$, $\alpha_\text{RIS,PU}$, and $\alpha_\text{RIS,Eve}$, respectively. The path-loss exponents of the links unrelated with the RIS  are set at 4.
Due to the fact that the RIS can be carefully deployed, the scale-fading with respect to the RIS is considered to be the Rician fading. For example, the channel matrix from the PT to the RIS is expressed as 
$$\bm{G} = \sqrt{ \frac{\kappa_G}{\kappa_{G} + 1 }} \bm{G}^{\text{LoS}} + \sqrt{\frac{1}{\kappa_G + 1}} \bm{G}^{\text{NLoS}},$$ where $\bm{G}^{\text{LoS}}$ and $\bm{G}^{\text{NLoS}}$ denote the  line-of-sight (LoS) component and the Rayleigh fading component of $\bm{G}$, respectively, and $\kappa_G$ is the Rician factor.  $\bm{G}^{\text{LoS}}$ is modeled as $\bm{G}^{\text{LoS}}= \bm{\beta}_M(\varsigma_\text{AoA}) \bm{\beta}_N^H(\varsigma_\text{AoD})$, where $\bm{\beta}_{X}(\varsigma) = [1, e^{j\pi \text{sin}(\varsigma) }, \ldots, e^{j\pi (X-1) \text{sin}(\varsigma) }]^T$, $X\in \{M,N\}$,
$\varsigma_\text{AoA}$ and $\varsigma_\text{AoD}$ are the angle of arrival and angle of departure, respectively. Each element of  $\bm{G}^{\text{NLoS}}$ follows the CSCG distribution with zero mean and unit variance. The other channels related with the RIS can be similarly  defined.
 The channels unrelated with the RIS are modeled to be Rayleigh fading, and each element of which is also generated following $\mathcal{CN}(0,1)$. 

\begin{figure}
	\centering
	\includegraphics[width=0.7\linewidth]{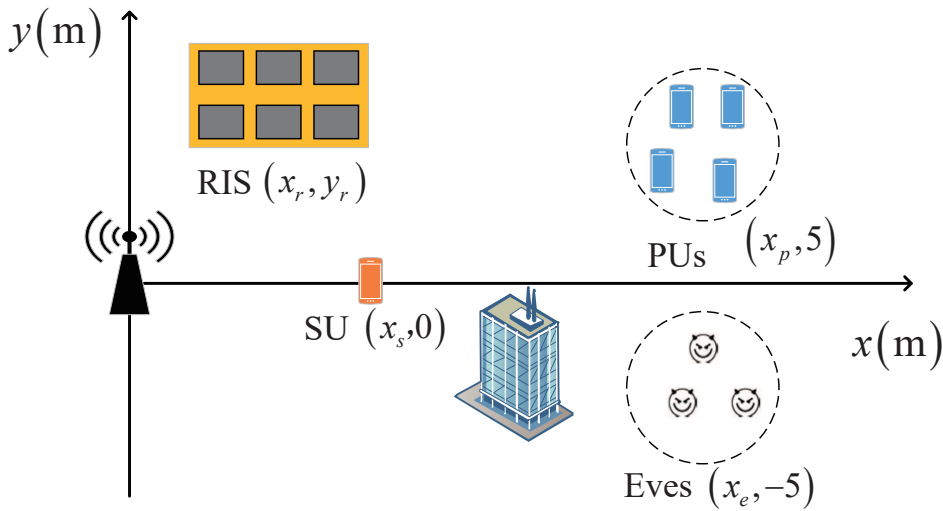}
	\caption{Simulation setup for the RIS assisted secure SR system.}
	\label{Location_Illustration}
\end{figure}

\begin{figure*}
	  \begin{minipage}[t]{0.50\textwidth}
		\centering
		\includegraphics[width=1 \linewidth]{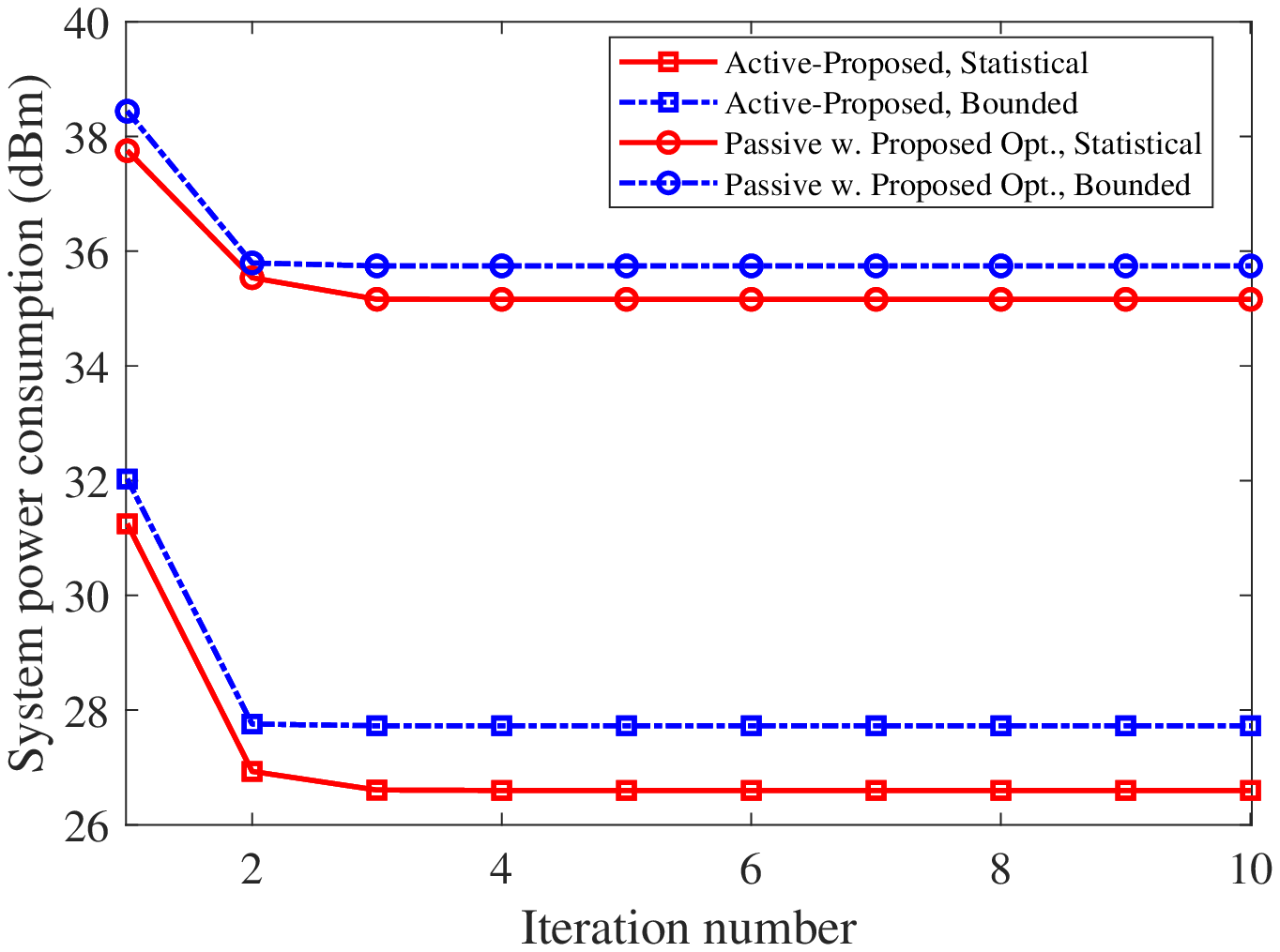}
		\caption{Convergence performance of the proposed algorithm.}
		\label{convergence}
	  \end{minipage}%
	  \begin{minipage}[t]{0.50\textwidth}
		\centering
		\includegraphics[width=1 \linewidth]{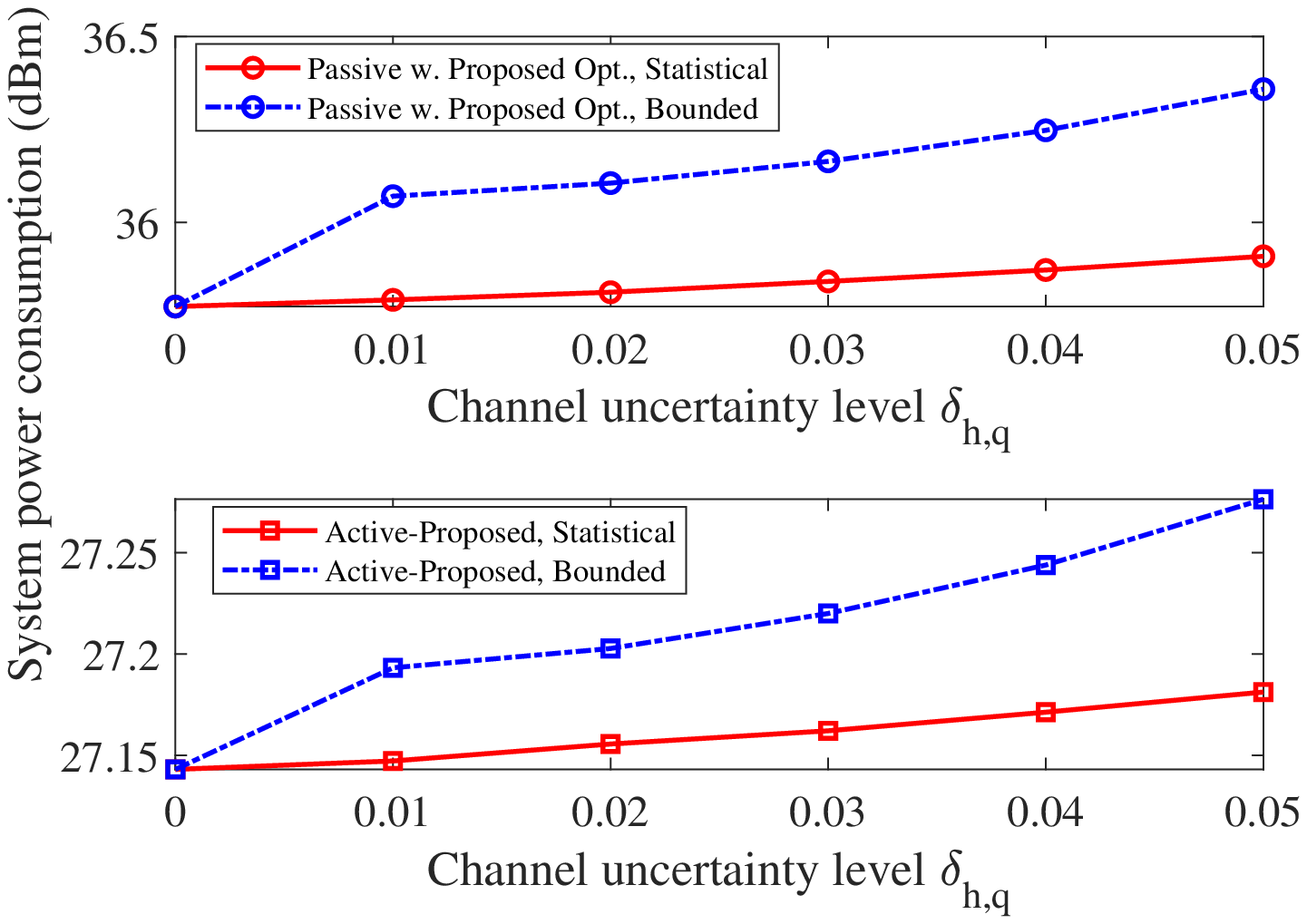}
		\caption{System power consumption versus the channel uncertainty \protect\\ level $\delta_{h,q}$.}
		\label{errorlevel}
		\end{minipage}%
\end{figure*}

According to \cite{Bernstein-Type}, for the statistical CSI error model, we model the variance of $\Delta g_{e,q}$ and $\Delta h_{e,q}$ as  $ \epsilon_{g,q}^2=\delta_{g,q}^2 {\left| \left|\hat {\bm{g}}_{e,q}\right| \right|}_2^2 $ and $ \epsilon_{h,q}^2=\delta_{h,q}^2 {\left| \left|\hat {\bm{h}}_{e,q}\right| \right|}_2^2 $, respectively, where 
 $ \delta_{g,q} \in [0,1) $ and  $ \delta_{h,q} \in [0,1) $ are used to  indicate the relative amount of CSI uncertainties of $\bm{h}_{e,q}$ and $\bm{g}_{e,q}$, respectively.  
Denote the  inverse cumulative distribution function  of the Chi-square distribution as $F_{R}^{-1}(\cdot)$, where $R$ is the degree of freedom. 
For the bounded CSI error model, the radii of the uncertainty regions of $\bm{h}_{e,q}$ and $\bm{g}_{e,q}$  are respectively expressed as  
	\begin{equation}
	\begin{aligned}
	\xi_{g,q}=\sqrt{ \frac{\epsilon_{g,q}^2}{2} F_{2M}^{-1}(1-\rho_q) }, \\
	\xi_{h,q}=\sqrt{ \frac{\epsilon_{h,q}^2}{2} F_{2N}^{-1}(1-\rho_q) }. 	
	\end{aligned}
\end{equation}
It should be noted that the CSI error model described above can guarantee the fair performance comparison between the robust beamforming designs for the bounded CSI error model and the statistical CSI error model \cite{Bernstein-Type}.

 Unless otherwise specified, the other parameters are  given as follows: $x_s = 20$ m, $x_r =  16$ m, $y_r = 2$ m, $x_e = 80$ m, $x_p = 80$ m, $r_p = r_e =5$ m, $N = 6$, $M=16$,  $K =2$, $Q=2$, $L=60$, $\sigma_{p,k}^2 = \sigma_{s}^2 = \sigma_{e,q}^2 = \sigma_{I}^2= -90$ dBm, $\gamma_s = 23$ dB, $\gamma_e = 15$ dB,
 $\rho_q = 0.05$,  $ \delta_{g,q}=0.01 $, $\delta_{h,q}=0.02 $, $\varepsilon = 0.01$, $\alpha_\text{RIS,SU} = 3.2$, $\alpha_\text{RIS,Eve} = 4$, $\eta^2 = 10$,  $P_\text{SW} = -10$ dBm,  and $P_\text{DC} = -5$ dBm \cite{PanActive}.

\subsection{Benchmark Schemes}
\label{SectionBechmark}
{For performance comparison, the passive RIS scheme with proposed optimization (abbreviated as Passive w. Proposed Opt.) can be used as a benchmark.} Specifically, for the passive RIS scenario,  by setting $\sigma_I^2 = 0$ and $\eta=1$, the corresponding SNRs of decoding  $s(l)$ and $c$ at different nodes  can be straightforward to obtain  based on \eqref{SNRPU}, \eqref{SNREQ}, \eqref{SNRSS}, and \eqref{SNRSC}, respectively. According to \cite{PanActive}, the total system power consumption for the passive RIS scenario is expressed as 
\begin{align}
P_\text{tol}^\text{passive} = P_\text{PT}+ M P_\text{SW}.
\end{align} 
Similar to Section \ref{BoundedCSI} and Section \ref{StatisticalCSI},  the system power consumption minimization problems  for the passive RIS scenario in  the bounded CSI error model and the statistical CSI error model can also be solved by our proposed algorithm.  {Similarly, the hybrid RIS scheme with proposed optimization (abbreviated as Hybrid w. Proposed Opt.) is also used as a benchmark, in which the number of active elements is $M/2$.

Moreover, the schemes with random reflection beamforming and the scheme using a backscatter device (BD) with one antenna (abbreviated as BD scheme) are considered as the benchmarks.} Moreover, the non-robust beamforming schemes with optimization based on the estimated CSI related with Eves (i.e., $\hat{\bm{h}}_{e,q}$ and $\hat{\bm{g}}_{e,q}$) are employed for performance comparison.


 \subsection{Performance Evaluation}


In Fig. \ref{convergence}, we first investigate the convergence performance of the proposed algorithm with    $\gamma_c = 65$ dB, $\alpha_\text{PT,RIS}=2.4$ and $\alpha_\text{RIS,PU}=2.4$. As observed in Fig. \ref{convergence}, our proposed algorithm can converge to  the predefined threshold (i.e., $ \varepsilon$) after only several iterations, which indicates that the efficiency of our proposed algorithm is satisfying. 


In Fig. \ref{errorlevel}, we investigate the  system power consumption versus the channel uncertainty level $\delta_{h,q}$ with  $\gamma_{c}=65$ dB, $\alpha_\text{PT,RIS}=2.4$ and $\alpha_\text{RIS,PU}=2.4$. {As observed in Fig. \ref{errorlevel},  the proposed active RIS scheme consumes less power than that of the passive RIS scheme with proposed optimization. For example, when $\delta_{h,q}=0.05$, compared to the passive RIS scheme, the proposed active RIS can save  25.0\% and 24.3\%  system power for the  bounded and statistical CSI error models, respectively.}
 This observation can be explained in two aspects. On the one hand,  the active RIS can amplify the incident signal to significantly  enhance the communication efficiency of both primary and secondary transmissions. On the other hand, the confidential information intercepted by the Eves is destructed severely with the help of the active RIS. {In addition, the  thermal noise generated at the active  RIS can be considered as the artificial noise  to interfere the Eves by reducing the SNR for decoding $s(l)$.} Thus, less power is required at the PT for the active RIS scheme to satisfy all the constraints. It is noted that compared to the passive RIS, the additional power consumption of using the  active RIS is generally limited. 


 \begin{figure*}
  \begin{minipage}[t]{0.50\textwidth}
    \centering
    \includegraphics[width=1 \linewidth]{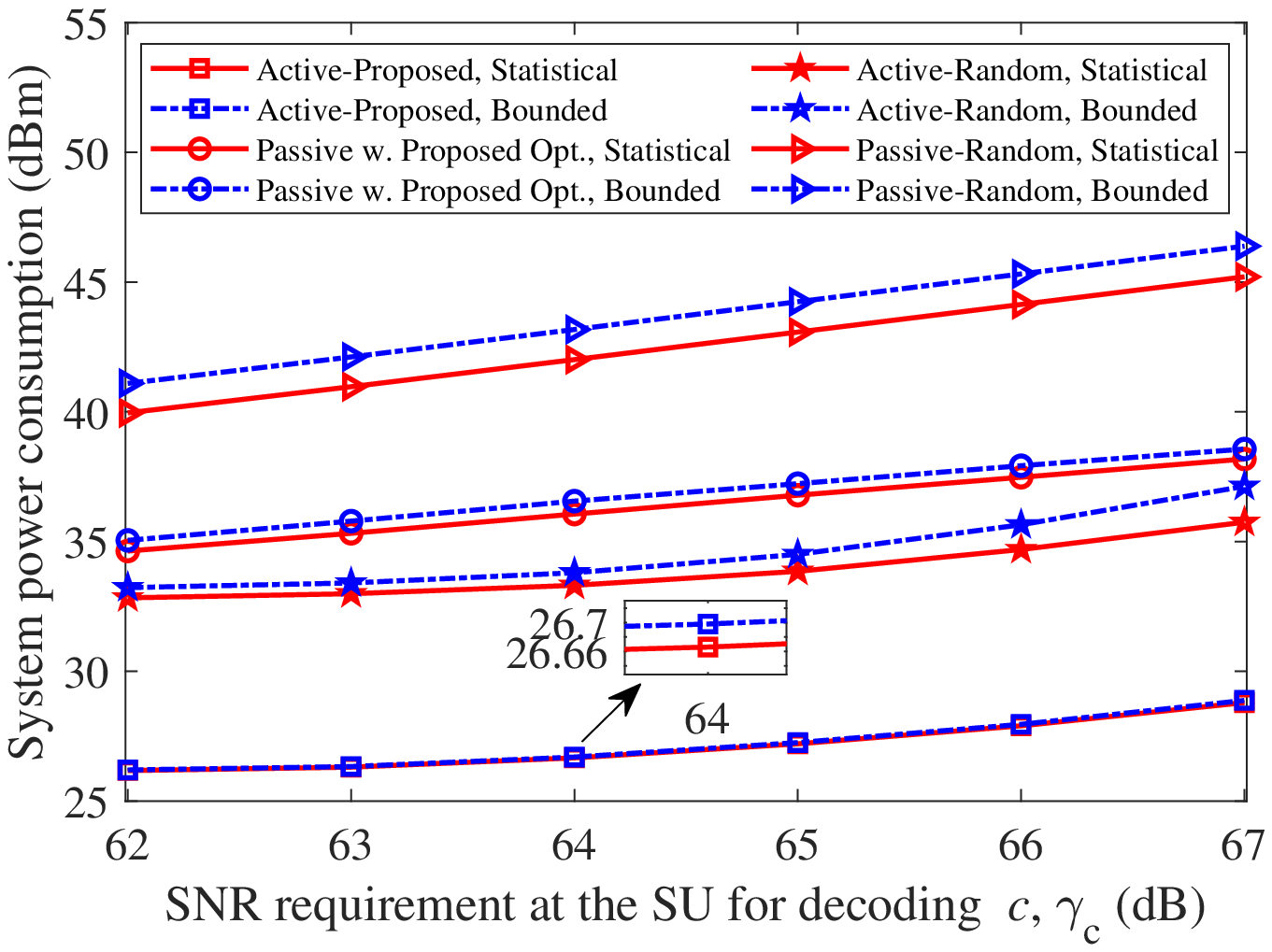}
    \caption{System power consumption versus the SNR requirement \protect\\for decoding $c$  with different schemes.}
    \label{gamma_c1}
  \end{minipage}%
  \begin{minipage}[t]{0.50\linewidth}
    \centering
    \includegraphics[width=1 \linewidth]{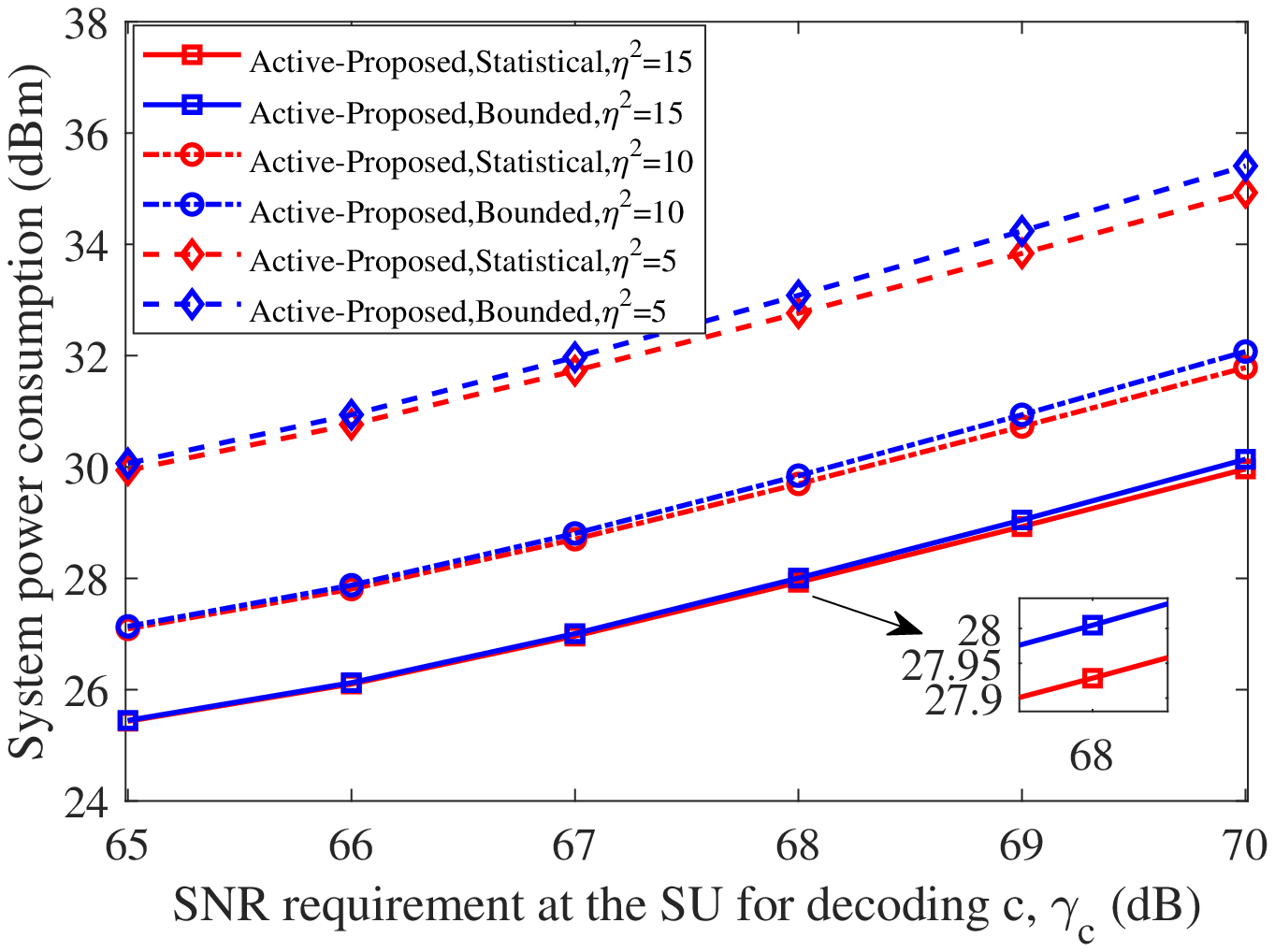}
    \caption{System power consumption versus the SNR  requirement \protect\\for  decoding $c$ with different amplification factors.}
    \label{gamma_c2}
  \end{minipage}
\end{figure*}
 \begin{figure*}
	\begin{minipage}[t]{0.49\textwidth}
		\centering
		\includegraphics[width=1 \linewidth]{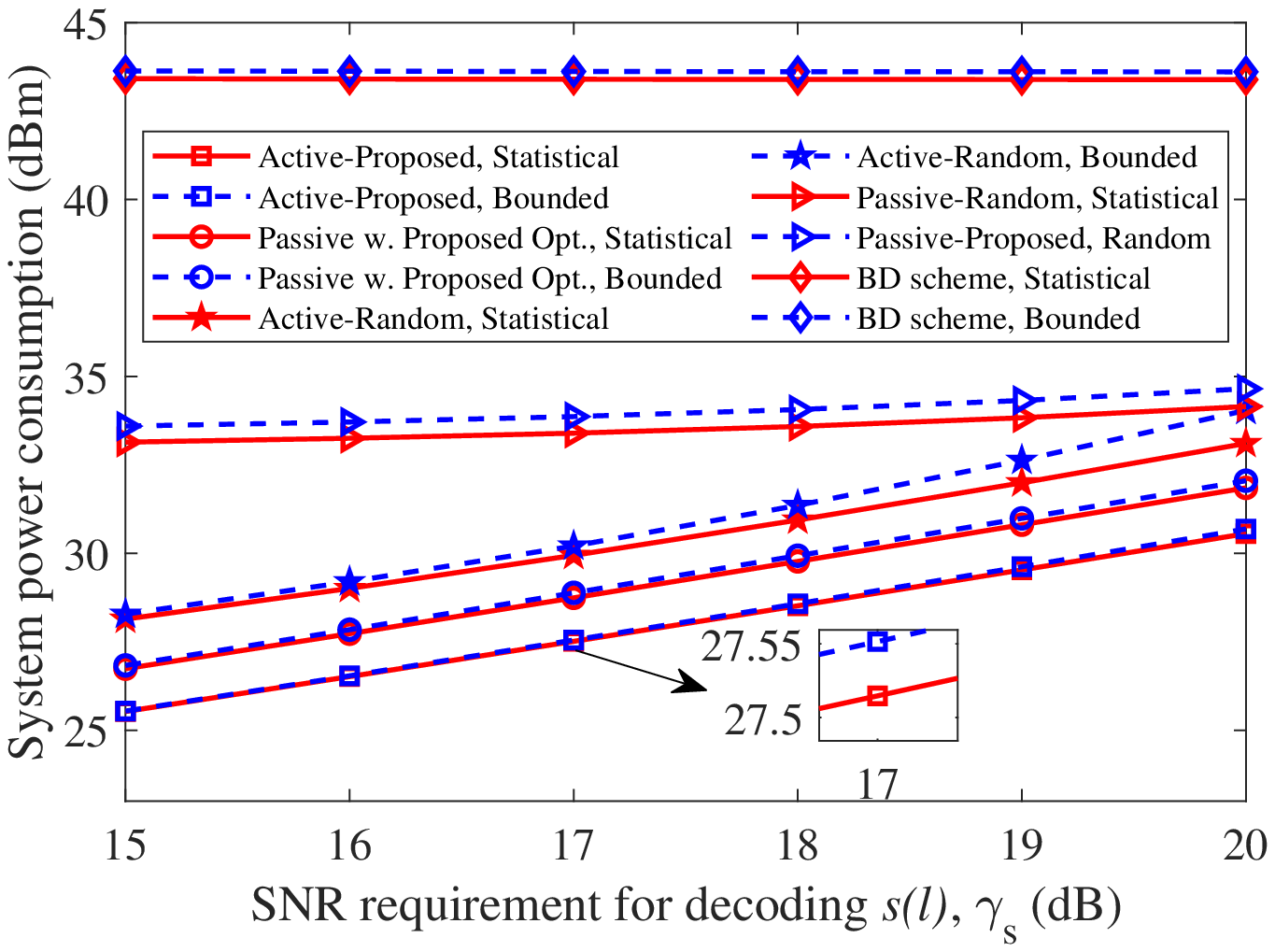}
		\caption{System power consumption versus the SNR requirement \protect\\for  decoding $s(l)$.}
		\label{gamma_s}
	\end{minipage}%
	\begin{minipage}[t]{0.49\linewidth}
		\centering
		\includegraphics[width=1 \linewidth]{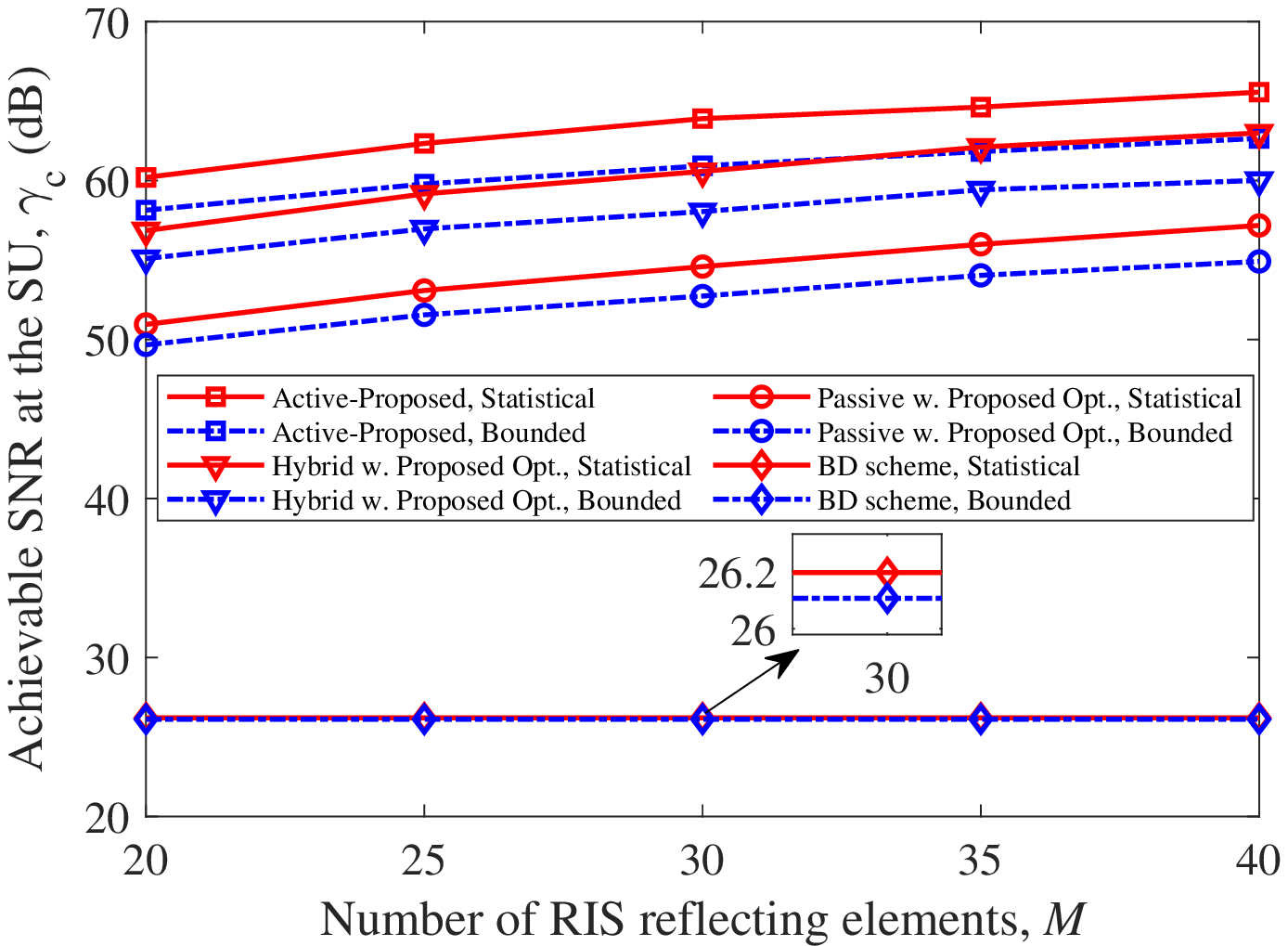}
		\caption{Achievable SNR for decoding $c$ versus the number of reflecting elements.}
		\label{Knumber}
	\end{minipage}
\end{figure*}


In Fig. \ref{gamma_c1}, we study the impact of the SNR requirement at the SU for decoding the secondary signal (i.e., $c$) on the system power consumption with $\alpha_\text{PT,RIS}=2.4$ and $\alpha_\text{RIS,PU}=2.4$. {First of all, we observe that the system power consumption functions of all schemes are  increasing functions with respect to the SNR requirement at the SU for decoding $c$. This is because as the SNR requirement at the SU for decoding $c$ improves, more transmit power at the PT is needed to satisfy the SNR constraints.} However, the  system power consumption for the proposed  active RIS scheme  in the bounded and statistical CSI error models is the least. This is due to the fact that the deployment of the active RIS can significantly improve the secondary transmission efficiency. Moreover, it is seen that the optimization of reflection beamforming can enhance the system energy efficiency. For example, when $\gamma_c = 64$ dB, compared to the active RIS scheme with random reflection beamforming, the proposed active RIS scheme with optimized reflection beamforming can save 6.65 dBm system power for the statistical CSI error model. 

 \begin{figure*}
	\begin{minipage}[t]{0.5\textwidth}
		\centering
		\includegraphics[width=1 \linewidth]{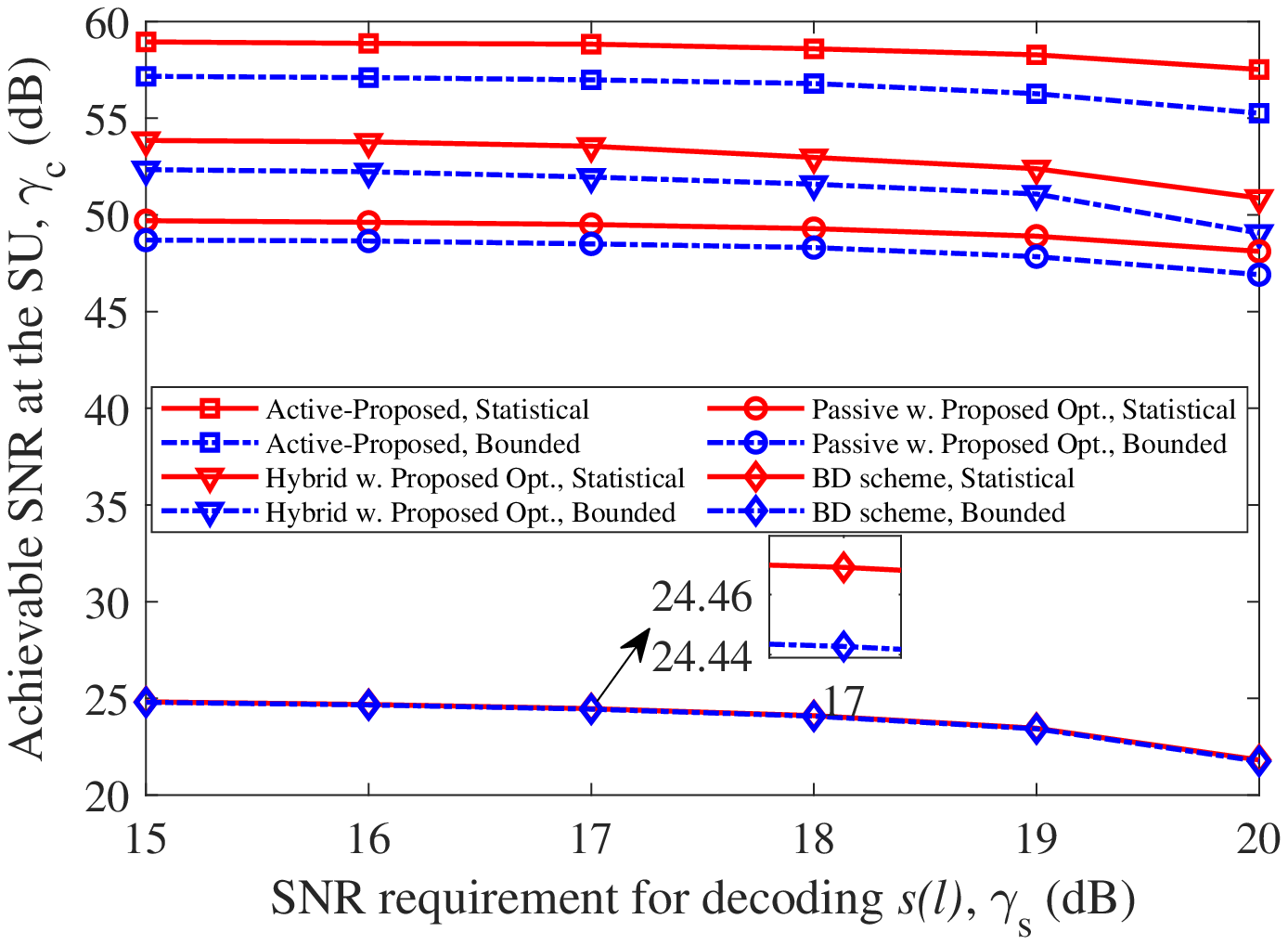}
		\caption{Achievable SNR for decoding $c$ versus  the SNR \protect\\requirement for  decoding $s(l)$.}
		\label{gammaCvsgammaS}
	\end{minipage}%
	\begin{minipage}[t]{0.5\linewidth}
		\centering
		\includegraphics[width=1 \linewidth]{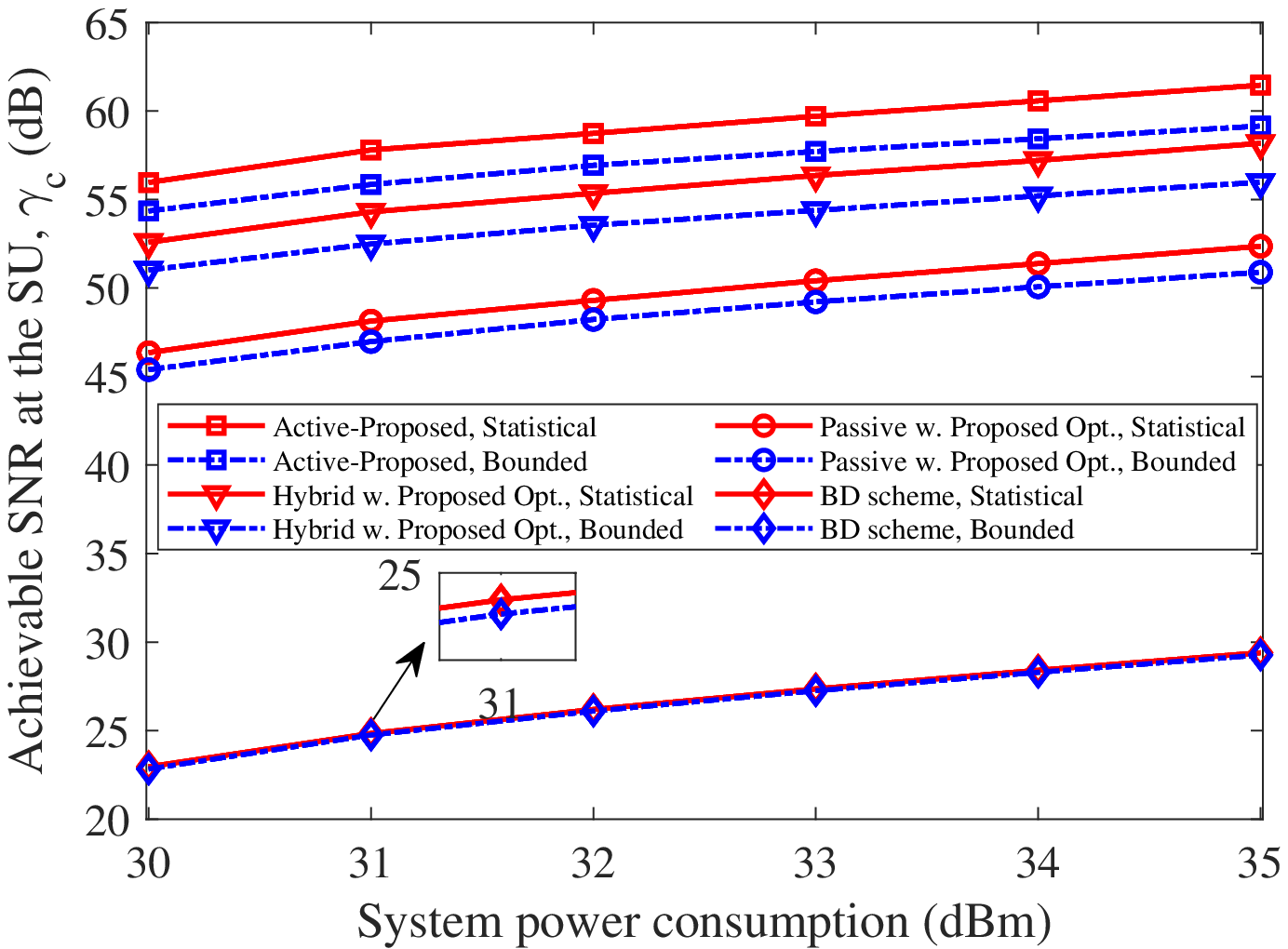}
		\caption{Achievable SNR for decoding $c$ versus  the system \protect\\  power consumption.}
		\label{gammaCvsPower}
	\end{minipage}
\end{figure*}


{Fig. \ref{gamma_c2} shows the system power consumption for the proposed active RIS scheme versus the SNR requirement at the SU for decoding $c$ with   different amplification factors with $\alpha_\text{PT,RIS}=2.4$ and $\alpha_\text{RIS,PU}=2.4$. As observed in Fig. \ref{gamma_c2}, the schemes with 
a larger amplification factor generally consume less power because the secondary transmission efficiency is positively correlated with the amplification factor. With the same amplification factor,   the scheme with the statistical CSI error model always consumes less power than that with the bounded CSI error model. This is because for the bounded CSI error model, more power is needed to meet the SNR requirement at the Eves for the channel generation with the worst-case error. Again, we observe that the increase of $\gamma_c$ results in more system power consumption for all schemes, which shows the impact of the SNR  requirement at the SU on the system performance.}


Fig. \ref{gamma_s} depicts the effect of the SNR requirement for decoding the primary signal $s(l)$ on the system power consumption with $\gamma_c = 40$ dB, $\alpha_\text{PT,RIS}=3.2$, and $\alpha_\text{RIS,PU} = 4$. As the SNR requirement for decoding $s(l)$ increases, the required power at the PT for satisfying all constraints improves. It is seen that 
 the proposed active RIS scheme with the statistical CSI error model consumes the least power due to the transmission efficiency enhancement provided by using the active RIS and the accurately robust design achieved by the statistical CSI error model.  Compared to the proposed schemes,  the gaps between the bounded and statistical CSI error models for  the schemes with random reflection beamforming  are larger. This is due to the fact that the performance enhancement by the random reflection beamforming is very limited, which results in more consumed power to address the worst-case error realization following  the bounded CSI error model. {Furthermore, we observe that the system power consumption of the BD scheme is the largest since the performance enhancement provided by the BD with one antenna is very limited. This observation confirms the superiority of using the RIS for performance improvement in SR systems.}



{The impact of the number of RIS reflecting elements on the  achievable SNR for decoding the secondary signal at the SU with   $P_\text{tol}^\text{active} = 32$ dBm, $\alpha_\text{PT,RIS}=3.2$, $\alpha_\text{RIS,PU}=4$ and $\gamma_s=20$ dB is shown in Fig. \ref{Knumber}.  As observed in Fig. \ref{Knumber}, increasing the number of RIS reflecting elements can reduce the system power consumption for the RIS enabled schemes. The reason is that  by deploying more reflecting elements at the RIS, more transmission links can be provided for enhancing the communication efficiency of  the  secondary transmission.  Moreover, compared to the hybrid RIS scheme and the passive RIS scheme with proposed optimization, the proposed active RIS scheme achieves a higher SNR for decoding the secondary signal at the SU. For example, the achievable SNRs of the proposed active RIS scheme with $M=20$ for the  bounded and statistical CSI error models are 58.3 dB and 60.2 dB, respectively, which are even larger than those achieved by the passive RIS scheme with $M=40$.
It is worth noting that the SNR achieved by the BD scheme for decoding the secondary signal at the SU is constant since the BD has only one antenna.}

\begin{figure*}
	\begin{minipage}[t]{0.5\linewidth}
		\centering
		\includegraphics[width=1 \linewidth]{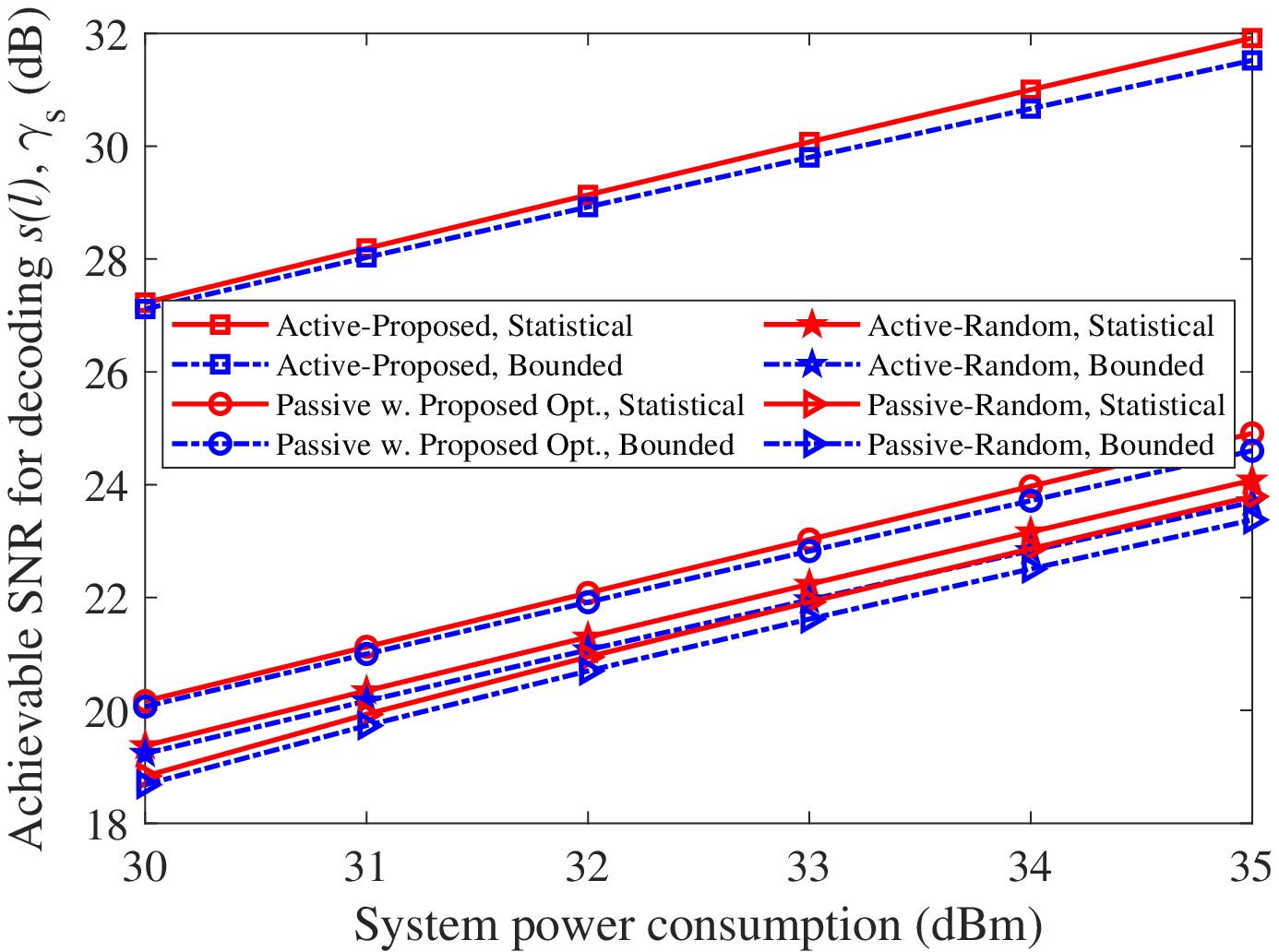}
		\caption{Achievable SNR for decoding $s(l)$ versus the  system \protect \\power consumption.}
		\label{gammaSvsPower}
	\end{minipage}
	\begin{minipage}[t]{0.5\textwidth}
		\centering
		\includegraphics[width=1 \linewidth]{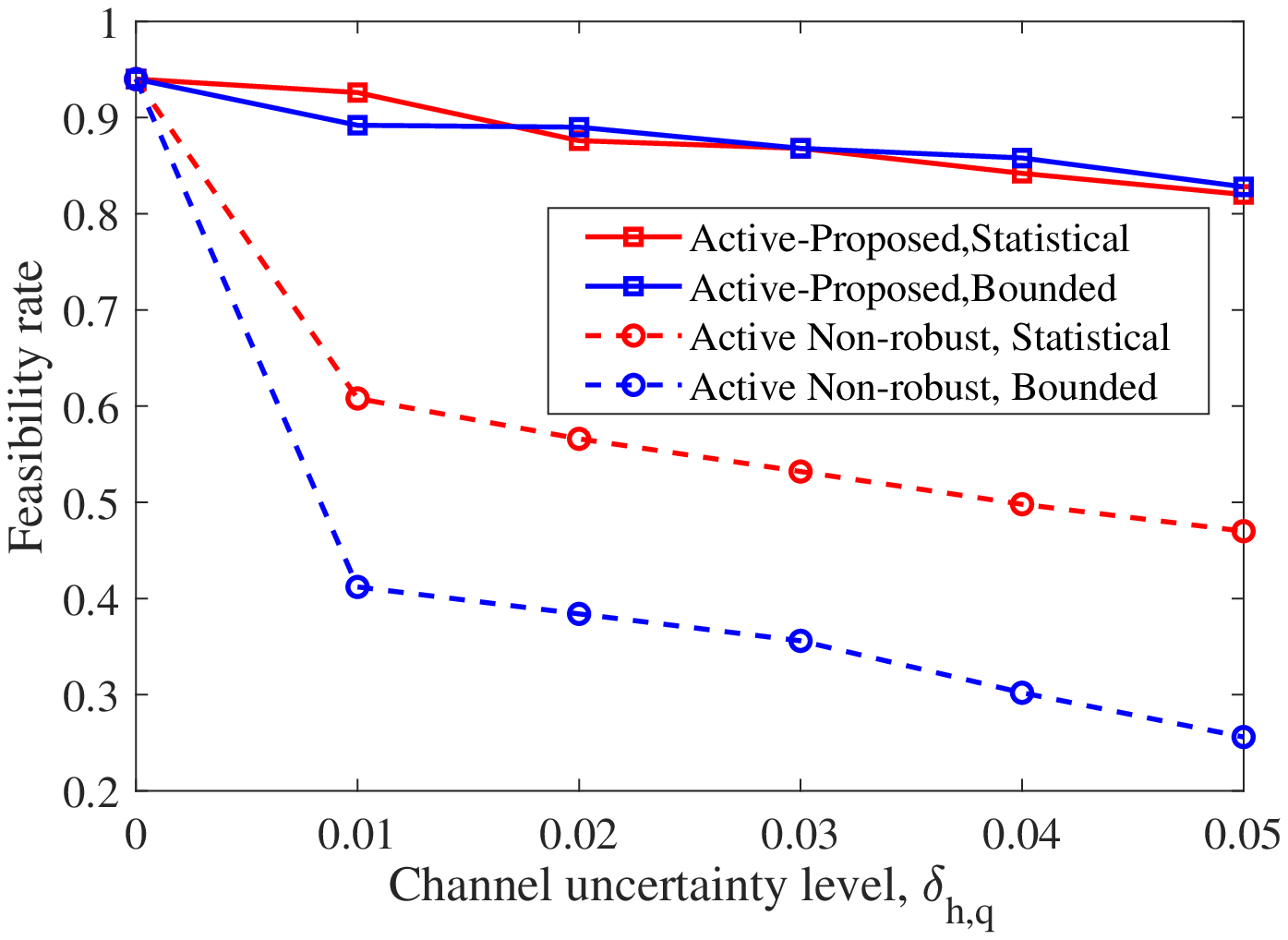}
		\caption{Feasibility rate versus the channel uncertainty level $\delta_{h,q}$.}
		\label{feasibility}
	\end{minipage}%
\end{figure*}

In Fig. \ref{gammaCvsgammaS}, we evaluate the achievable SNR for decoding the secondary signal at the SU versus the SNR requirement for decoding the primary signal with $P_\text{tol}^\text{active} = 32$ dBm, $\alpha_\text{PT,RIS}=3.2$, and $\alpha_\text{RIS,PU} = 4$. It is found that for each CSI error model, the proposed active RIS scheme can achieve the largest SNR for decoding  $c$, which again indicates that proposed active RIS scheme can significantly improve the secondary transmission efficiency. By improving the SNR requirement for decoding $s(l)$, the achievable SNRs for decoding $c$ of all the schemes reduce. This is due to the fact that the transmit beamforming and reflection beamforming should be configured towards  the PUs to satisfy their SNR requirements. {It is observed that the SNR at the SU achieved by the hybrid RIS scheme is larger than that achieved by the passive RIS scheme but smaller than that achieved by the proposed active RIS scheme, which implies that the number of  active elements is an important factor affecting the performance enhancement. Furthermore, we find that the SNR at  the SU achieved by the BD scheme is the smallest.}

 Fig. \ref{gammaCvsPower} shows the  achievable SNR for decoding the secondary signal at the SU versus the system power consumption with $\gamma_s = 18$ dB, $\alpha_\text{PT,RIS}=3.2$, and $\alpha_\text{RIS,PU}=4$. By increasing the system power consumption, the achievable SNRs for all the schemes improve. The reason is that as the system power consumption increases, the PT can transmit primary signal with a higher power, which thus improves the received signal power at the PUs and SU. 
{As observed in Fig. \ref{gammaCvsPower}, compared to the BD scheme in the bounded and statistical CSI error models, the proposed active RIS scheme can achieve up to 138.03 \% and 143.76\% performance gain in terms of the SNR for decoding $c$, respectively, which again confirms the superiority of our proposed active RIS scheme. In addition, the performance comparison between the proposed active RIS scheme and the hybrid RIS scheme is consistent with the observation in Fig. \ref{gammaCvsgammaS}.}

 {Fig. \ref{gammaSvsPower} shows the  achievable SNR for decoding the primary signal versus the system power consumption with $\gamma_c = 45$ dB, $\alpha_\text{PT,RIS}=3.2$, and $\alpha_\text{RIS,PU}=4$. As observed in Fig. \ref{gammaSvsPower}, increasing the system power consumption improves the achievable SNR for decoding $s(l)$ since more power can be allocated to the PT for confidential information delivery. Similarly,  the proposed active RIS scheme  achieves the best performance. Compared to the passive RIS scheme with proposed optimization, the proposed active RIS scheme can achieve up to  35.11\% and 35.02\% performance gain for the  bounded and statistical CSI error models, respectively. Again, the performance gaps between the proposed active RIS scheme with proposed optimization and random reflection beamforming verify the importance of optimally adjusting the amplitude reflection coefficients and phase shifts at the active RIS.
 }

The impact of the channel uncertainty level $\delta_{h,q}$ on the feasibility rate of \textbf{P1} and \textbf{P7}  is depicted in Fig. \ref{feasibility}. As observed in Fig. \ref{feasibility}, the proposed active RIS scheme with robust beamforming design can achieve satisfying feasibility rates, e.g., the feasibility rates for the bounded and statistical CSI error models  can exceed 0.82 even if $\delta_{h,q} =0.05$. However, as $\delta_{h,q}$ increases, the non-robust beamforming design has a high probability of failing to satisfy the constraints. For example, when $\delta_{h,q} =0.05$, the feasibility rate of the non-robust beamforming design for the bounded CSI error model is only 0.256. This  observation indicates that the proposed robust beamforming design works well for the   imperfect CSI scenarios.

\section{Conclusions}
\label{Conculsions}
This paper has investigated the secure transmission for the active RIS enabled SR system against  the existence of Eves. The active RIS has been adopted to achieve two purposes, i.e., the first one is to assist  the confidential  information transmission from the PT to multiple PUs, and the second one is  to deliver its own information by modulating and reflecting the primary signal. Taking into account the fact that the Eves are usually unknown, we have considered the bounded CSI error model and statistical CSI error model to capture the uncertainty of acquiring the CSI related with Eves. For  each model, we have formulated the system power consumption minimization problem by  designing the robust transmit beamforming and reflection beamforming. Specifically, for the bounded CSI error model, we have considered the worst-case SNR constraints at the Eves and implemented the S-Procedure to address the characteristic of having semi-infinite inequalities. While,   the SNR outage probability constraints have been considered for the statistical CSI error model and properly transformed  into tractable forms  by using the  Bernstein-Type Inequality.
The AO algorithm with SDR and SROCR techniques has been proposed to solve the formulated problems efficiently. Moreover, the tightness of applying the SDR has been proved, which results in a highly accurate solution. Extensive numerical results have been provided to confirm the convergence of  the proposed algorithm, and they also demonstrate that the proposed schemes always consume less power compared to  the benchmark schemes.

\appendix

\section{Proof of Theorem \ref{RankOne}}
The Lagrangian function of $\textbf{P3}$ with respect to $\bm W$ is given by
		$\mathcal{L}(\bm{W})= \text{Tr}( \bm{W}) + \text{Tr} (\bm{\Upsilon} \bm{W}) -\sum_{k=1}^K \lambda_k (\text{Tr}({\bm{H_}{p,k}}\bm{W}) +\text{Tr}({\bm{G}_{p,k}}\bm{W})) -\mu(\text{Tr}({\bm{H}_s}\bm{W})+ \text{Tr}({\bm{G}_s}\bm{W}))
				-\xi L  \text{Tr}({\bm{G}}_{s}\bm{W}) 
				-\text{Tr}(\bm{\Omega} \bm W)- \sum_{q=1}^Q \text{Tr}(\bm{M}_q \hat{\bm{W}_q})+\eta,$
where  $\lambda_k \ge 0$, $\mu \ge 0 $, $\xi\ge 0$, $\bm{M}_q \succeq 0$, and $\bm{\Omega} \succeq 0$  are the Lagrangian multipliers with respect to the constraints C6, C7, C8, C9, and C11, respectively, and $\eta$ is the term unrelated with $\bm W$.  
The Karush–Kuhn–Tucker (KKT) conditions of \textbf{P3} are expressed as 
\begin{align}
\label{Partial}
	&\frac{\partial{\mathcal{L}}}{\partial{\bm{W}}} = \bm{{I}}_{N } - \bm{B} - \bm{\Omega}^* =\bm 0, \\
	\label{RankOne}
	& \text{Tr} (\bm{\Omega}^* \bm{W}^*) = 0,
\end{align}
where $ \bm B= \sum_{k=1}^K \lambda_k^*(\bm{H}_{p,k}+\bm{G}_{p,k})+ \mu^*(\bm{H}_{s}+\bm{G}_{s})+\xi^*L  \bm{G}_{s}+ \sum_{q=1}^Q \hat{\bm{M}}_q - \bm{\Upsilon}$, $\lambda_k^*$, $\mu^*$, $\xi^*$ and $\bm{\Omega}^*$ are the optimal Lagrangian multipliers, and $\hat{\bm{M}_q} \in \mathcal{C}^{N\times N}$ denotes the value of $\frac{\partial{\text{Tr}(\bm{M}_q \hat{\bm{W}_q})}} {\partial{\bm{W}}}$ \cite{Matrix}.

From \eqref{RankOne}, we observe that  $\bm{W}^*$ lies in the null space of $\bm{\Omega}^*$. Thus, we analyze the rank of $\bm{W}^*$ by considering 
the structure of $ \bm{I}_N-\bm{B} $ (i.e., $\bm{\Omega}^*$)  according to \eqref{Partial}. Denote the largest eigenvalue of $\bm{B}$ as $\vartheta_\text{max}$. If
 $\vartheta_\text{max}>1$, the smallest eigenvalue of $\bm{{I}}_{N } - \bm{B}$ is smaller than zero, which contradicts with that  $ \bm{\Omega^*}$ should be a positive semi-definite matrix. If $\vartheta_\text{max} <1$, $\bm W^*=\bm{0}$ due to the fact that $ \bm{\Omega^*} $ is a positive semi-definite matrix with full rank, which is  obviously not the optimal solution to \textbf{P3}. If $\vartheta_\text{max}=1$, the null space of $\bm{\Omega^*}$ is spanned by $ \bm{\omega}_\text{max} \in \mathcal{C}^{N \times 1} $, which is a unit-norm eigenvector of $\bm B$ with respect to $\vartheta_\text{max}$. Hence,  there exists a rank-one transmit beamforming vector expressed by $\bm W =\alpha \bm{\omega}_{max} \bm{\omega}_{max}^H $, where $\alpha >0$ is a scaling factor.

This completes the proof of Theorem \ref{TheoremRankOne}.

\end{document}